\documentclass[11pt]{article}

\usepackage[margin=1in]{geometry}
\usepackage{amsmath,amssymb,amsthm,mathtools}
\usepackage{microtype}
\usepackage{enumitem}
\usepackage{hyperref}
\usepackage{xcolor}
\usepackage{graphicx}
\usepackage{tikz}
\usetikzlibrary{arrows.meta,positioning,calc}

\hypersetup{
  colorlinks=true,
  linkcolor=blue!50!black,
  citecolor=blue!50!black,
  urlcolor=blue!50!black,
  pdftitle={The Collapse of Unentangled Stoquastic Merlin-Arthur Proof Systems},
  pdfauthor={William Gay and Fernando Granha Jeronimo}
}

\allowdisplaybreaks
\newtheorem{theorem}{Theorem}[section]
\newtheorem{lemma}{Lemma}[section]
\newtheorem{proposition}{Proposition}[section]
\newtheorem{corollary}{Corollary}[section]
\newtheorem{definition}{Definition}[section]

\theoremstyle{remark}
\newtheorem{remark}{Remark}[section]

\usepackage{cleveref}

\crefname{theorem}{Theorem}{Theorems}
\Crefname{theorem}{Theorem}{Theorems}
\crefname{lemma}{Lemma}{Lemmas}
\Crefname{lemma}{Lemma}{Lemmas}
\crefname{proposition}{Proposition}{Propositions}
\Crefname{proposition}{Proposition}{Propositions}
\crefname{corollary}{Corollary}{Corollaries}
\Crefname{corollary}{Corollary}{Corollaries}
\crefname{definition}{Definition}{Definitions}
\Crefname{definition}{Definition}{Definitions}
\crefname{remark}{Remark}{Remarks}
\Crefname{remark}{Remark}{Remarks}
\crefname{claim}{Claim}{Claims}
\Crefname{claim}{Claim}{Claims}

\newcommand{\E}{\mathbb E}
\newcommand{\poly}{\operatorname{poly}}
\newcommand{\ma}{\ensuremath{\mathsf{MA}}}
\newcommand{\bpp}{\ensuremath{\mathsf{BPP}}}
\newcommand{\bqp}{\ensuremath{\mathsf{BQP}}}
\newcommand{\pspace}{\ensuremath{\mathsf{PSPACE}}}
\newcommand{\am}{\ensuremath{\mathsf{AM}}}
\newcommand{\pp}{\ensuremath{\mathsf{PP}}}

\newcommand{\stoqma}{\mathsf{StoqMA}}
\newcommand{\stoqQMA}{\mathsf{StoqQMA}}
\newcommand{\qma}{\mathsf{QMA}}
\newcommand{\expclass}{\mathsf{EXP}}
\newcommand{\nexp}{\ensuremath{\mathsf{NEXP}}}
\newcommand{\sym}{\operatorname{Sym}}

\newcommand{\eps}{\varepsilon}
\newcommand{\ket}[1]{\lvert #1\rangle}
\newcommand{\bra}[1]{\langle #1\rvert}

\newcommand{\abs}[1]{\left\lvert #1\right\rvert}
\newcommand{\norm}[1]{\left\lVert #1\right\rVert}

\newcommand{\dH}{d_{\mathrm H}}
\newcommand{\KL}{D_{\mathrm{KL}}}
\newcommand{\Tr}{\operatorname{Tr}}

\title{\texorpdfstring{The Collapse of Unentangled Stoquastic \\ Merlin--Arthur Proof Systems}{The Collapse of Unentangled Stoquastic Merlin-Arthur Proof Systems}}
\author{William Gay\thanks{{\tt University of Illinois, Urbana-Champaign}. {\tt whgay2@illinois.edu}. }  \and Fernando Granha Jeronimo\thanks{{\tt University of Illinois, Urbana-Champaign}. {\tt granha@illinois.edu}. }}
\date{}

\begin{document}
\maketitle


\begin{abstract}
Entanglement and interference are among the most fundamental properties of quantum mechanics. In this work, we investigate the role and power of interference in the context of detecting entanglement. We do so from a computational complexity lens. More precisely, we prove that unentanglement gives no additional power to stoquastic Merlin--Arthur verification.  
For every polynomially bounded number of provers $k=k(n)$,
\[
  \stoqma(k)=\stoqma .
\]
Conceptually, the proof separates the role of entanglement from the role of interference: once destructive interference is ruled out by stoquasticity, 
the product-state constraint can be absorbed into a polynomially larger one-witness stoquastic verification.

The main analytic ingredient is a positive, value-based de Finetti theorem for separately symmetric extensions.  If $M$ is an entrywise nonnegative positive semidefinite contraction on $A_1\otimes\cdots\otimes A_k$, then the nonnegative product value of $M$ is approximated to additive error $\eps$ by the largest eigenvalue of
\[
  \Pi_R^{<k}
  (M\otimes I)
  \Pi_R^{<k},
  \qquad
  R=O\!\left(\frac{k^2\sum_i\log\dim A_i}{\eps^3}\right),
\]
where $\Pi_R^{<k}$ is the operator on $A_1^{\otimes R} \otimes \cdots \otimes A_{k-1}^{\otimes R} \otimes A_k$ projecting to the subspace $\mathrm{Sym}^R(A_1) \otimes \cdots \otimes \mathrm{Sym}^{R}(A_{k-1}) \otimes A_k$. 

The spectral relaxation is then realized as an actual one-witness stoquastic verifier.  After replacing the uniform permutation averages in the symmetric projectors by inverse-polynomially close dyadic inverse-invariant averages, the relaxed operator is exactly the Hermitian overlap matrix of a polynomial-size stoquastic branch-overlap verifier.  The transformation preserves an inverse-polynomial gap.  Consequently, for polynomially bounded $k$,
\[
  \stoqma(k)=\stoqma\subseteq\am\cap\pp\subseteq\pspace .
\]
The positive de Finetti theorem is isolated as a standalone technique and may be useful in other nonnegative tensor-optimization and stoquastic-verification settings.
\end{abstract}

\clearpage

\section{Introduction}

Entanglement and interference are often intertwined in quantum complexity, but they play logically different roles.  Ordinary one-witness $\qma$ verification is governed by a spectral quantity: after compressing the verifier, the maximum acceptance probability is the largest eigenvalue of an efficiently described Hermitian operator.  Multiple unentangled proofs change this picture.  In $\qma(2)$, introduced by Kobayashi, Matsumoto, and Yamakami, Arthur receives two states promised to be unentangled across the two Merlins \cite{KMY09}.  The value is then an optimization over product, or equivalently separable, states rather than over all states.  This product-state constraint is a major source of expressive power and algorithmic difficulty, and it underlies the connections between $\qma(2)$, product testing, tensor optimization, separability testing, and short quantum proofs for classical languages \cite{ABD09,BT12,Bei10,CD10,CF13,LNN12,HM13,CS12,GHMW15,BKS17,JLW26}.  For unrestricted $\qma(2)$, the best general upper bound remains the generic nondeterministic-exponential simulation; see the survey \cite{JLW26} for a recent account of the area.

Stoquastic verification imposes a different structural restriction.  In the reversible-circuit formulation of Bravyi, Bessen, and Terhal, a stoquastic verifier uses classical reversible gates, initialized $\ket{0}$ and $\ket{+}$ ancillas, and one final Hadamard-basis acceptance test \cite{BBT06,BDOT08}.  Compressing out the initialized ancillas gives a real symmetric acceptance matrix that is positive semidefinite and entrywise nonnegative.  Thus an optimal witness can be chosen with nonnegative amplitudes in the computational basis: replacing a witness by the vector of absolute values can only increase its acceptance probability.  Stoquasticity therefore removes destructive interference from the witness optimization while retaining a genuinely quantum-looking Hadamard-basis test.  This positivity is the basis of the containment $\stoqma\subseteq\am$, the connection to approximate counting and random walks, and the central role of stoquastic local Hamiltonian problems in Hamiltonian complexity \cite{BBT06,BDOT08,BT10,CM16}.  It is also the reason that amplification is subtle: black-box error reduction for $\stoqma$ is not known in the same generality as for $\ma$, $\bpp$, $\bqp$, or $\qma$, and sufficiently strong forms of error reduction would imply further collapses \cite{AGL25}.  Related work has also connected $\stoqma$ to distribution testing and to stoquastic PCP and Hamiltonian questions \cite{Liu21,AG19}.

This paper studies the intersection of these two themes: unentanglement without destructive interference.  A $k$-prover stoquastic verifier optimizes an entrywise nonnegative positive semidefinite contraction $M$ over nonnegative product vectors:
\[
  \omega_+^{(k)}(M)=
  \max_{x_i\ge0,\ \norm{x_i}_2=1}
  \left\langle\bigotimes_{i=1}^k x_i,
  M\bigotimes_{i=1}^k x_i\right\rangle .
\]
For $k=2$, this is the class denoted $\stoqma(2)$, also written $\stoqQMA(2)$ in recent Hamiltonian-complexity work.  The product-state constraint remains, but the phases that usually allow entangled states to hide correlations from simple rounding procedures are absent.  Positive-amplitude multi-proof systems can still be very powerful \cite{JW23}; the additional question here is whether stoquastic branch-overlap verification removes that power.  The main question is therefore sharp: does unentanglement by itself increase the power of stoquastic Merlin--Arthur verification?

Recent developments make this question especially natural.  Liu and Wu initiated a systematic study of unentangled stoquastic proof systems, emphasizing $\stoqma(2)$ as the model of ``unentanglement without destructive interference'' \cite{LW26}.  Among other results, they prove short-proof lower bounds, closure and robustness statements, prover-compression theorems in high-completeness regimes, upper bounds based on the Barak--Kelner--Steurer sum-of-squares rounding theorem for nonnegative forms, and ETH-based evidence that this SoS route is essentially tight for the parameter regimes it addresses \cite{BKS14,LW26}.  Independently, Grilo and Rozos introduced stoquastic sparse-Hamiltonian problems and showed that the separable version of their stoquastic sparse-Hamiltonian problem is complete for the two-proof stoquastic class \cite{GR26}.  They explicitly ask for nontrivial upper bounds on $\stoqma(2)$ beyond the trivial containment in $\qma(2)$.

We prove that, in the standard inverse-polynomial-gap verifier model, unentanglement gives no additional power at all.

\begin{theorem}[Main theorem, informal]
For every polynomially bounded number of provers $k=k(n)$, unentangled stoquastic Merlin--Arthur verification with inverse-polynomial promise gap has exactly the same power as ordinary one-witness stoquastic Merlin--Arthur verification:
\[
  \stoqma(k)=\stoqma .
\]
Consequently,
\[
  \stoqma(k)\subseteq\am\cap\pp\subseteq\pspace .
\]
\end{theorem}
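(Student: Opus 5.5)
The inclusion $\stoqma\subseteq\stoqma(k)$ is immediate: $k-1$ of the provers can be ignored. The content is the reverse inclusion, and I would follow the route laid out in the abstract, in three stages.

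\textbf{Stage 1: the $k$-prover value as a nonnegative product optimization.} Fix a $k$-prover stoquastic verifier. Compressing out the initialized $\ket{0}$ and $\ket{+}$ ancillas gives an acceptance operator $M$ on $A_1\otimes\cdots\otimes A_k$. Here $M$ is real symmetric, positive semidefinite, entrywise nonnegative, and $0\le M\le I$. Because $M$ is entrywise nonnegative, replacing each proof $x_i$ by its entrywise absolute value can only raise the acceptance probability. So the maximum acceptance probability over unentangled proofs equals $\omega_+^{(k)}(M)$, and the completeness and soundness thresholds $c,s$ with $c-s\ge 1/\poly$ become statements about $\omega_+^{(k)}(M)$.

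\textbf{Stage 2: positive de Finetti bound.} The key step is to show that
\[
\lambda_{\max}\bigl(\Pi_R^{<k}(M\otimes I)\Pi_R^{<k}\bigr)\in\bigl[\omega_+^{(k)}(M),\ \omega_+^{(k)}(M)+\eps\bigr]
\]
for $R=O(k^2\sum_i\log\dim A_i/\eps^3)$.

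The lower bound is easy. Plug in $x_1^{\otimes R}\otimes\cdots\otimes x_{k-1}^{\otimes R}\otimes x_k$, which lies in the symmetric subspace.

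For the upper bound, take a top eigenvector $\psi$. Since the relaxed operator is also entrywise nonnegative (the symmetrizers are averages of permutation matrices), Perron--Frobenius lets us choose $\psi\ge0$. Its reduced state on one copy of each $A_i$ together with $A_k$ is then separately symmetric extendable, with nonnegative amplitudes.

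I would prove a value-based de Finetti statement iteratively over $i=1,\dots,k-1$, and I expect this to be the main obstacle. The plan is to condition on measurement outcomes of the other copies, in the information-theoretic style of Brand\~ao--Harrow. The chain rule gives that, for a random number of conditioned copies, the mutual information between one copy of $A_i$ and the rest is at most $\log\dim A_i/R$ on average. Pinsker's inequality then converts this into an additive error. Positivity is what lets us measure in the computational basis without losing value: the value is a sum of nonnegative terms $M_{ab}\psi_a\psi_b$. The conditional states can be rounded to nonnegative product vectors $\sqrt{p}$, where $p$ is the conditional distribution, and the Hellinger/fidelity bound between $\sqrt{p}$-type vectors controls the loss. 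Distributing error $\eps/k$ across the $k-1$ parties gives the $k^2$ and the polynomial-in-$1/\eps$ dependence. Taking $\eps=(c-s)/3$ keeps $R$ polynomial.

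\textbf{Stage 3: realizing the relaxation by a one-witness stoquastic verifier.} The witness lives on $A_1^{\otimes R}\otimes\cdots\otimes A_{k-1}^{\otimes R}\otimes A_k$.

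First, implement $\Pi^{<k}_R$ approximately. Average over permutations $\sigma$ drawn from a dyadic distribution that is closed under inversion; for example, take a product of random transpositions with uniform-over-power-of-two indices. The resulting operator $P=\E_\sigma P_\sigma$ is symmetric and entrywise nonnegative, and it is $1/\poly$-close to the projector. The $\ket{+}$ ancillas supply the randomness, and the controlled permutations are classical reversible gates.

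The verifier applies the controlled permutation, then runs the original circuit on the first copies, then applies the inverse permutation. The Hadamard-basis final test on the control register yields an overlap matrix $P(M\otimes I)P$. Note that $\lambda_{\max}(P(M\otimes I)P)$ is within $1/\poly$ of the ideal relaxed value, because $P$ approximates $\Pi_R^{<k}$ in operator norm.

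Finally, combine the errors. The $\eps$ from Stage 2 and the dyadic approximation error from this stage together stay below $(c-s)/3$. So the new one-witness verifier has thresholds separated by an inverse polynomial, giving membership in $\stoqma$.

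The consequences $\stoqma(k)\subseteq\am\cap\pp\subseteq\pspace$ then follow from the known containments $\stoqma\subseteq\am$ and $\stoqma\subseteq\pp$ (the latter via $\stoqma\subseteq\qma\subseteq\pp$).
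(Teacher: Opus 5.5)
\textbf{The step that fails as written is your Stage~3 claim that the circuit realizes $P(M\otimes I)P$.} Suppose one dyadic register $s$ controls both the routing $U_{\pi(s)}$ before the test and $U_{\pi(s)}^{-1}$ after it. Then $s$ is never changed, so compressing against $\ket{+^q}$ keeps only the diagonal terms. The raw overlap is therefore the twirl $\E_s U_{\pi(s)}^{-1}(G\otimes I)U_{\pi(s)}$, not $\bigl(\E_sU_{\pi(s)}\bigr)(G\otimes I)\bigl(\E_sU_{\pi(s)}\bigr)$. The twirl does not vanish on the orthogonal complement of the symmetric subspace, so in general it is far in norm from $\Pi_R^{<k}(M\otimes I)\Pi_R^{<k}$, and your perturbation step does not apply.

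The paper gets the two-sided compression by using independent branch registers $t_i,u_i$ for the left and right routings. The compression then factorizes as $\widetilde\Pi\,(G\otimes I)\,\widetilde\Pi$ with $\widetilde\Pi=\E_tU_{\pi(t)}$. Inverse-invariance makes $\widetilde\Pi$ symmetric, so the Hermitian part is exactly $\widetilde\Pi(M\otimes I)\widetilde\Pi$.

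Your single-register verifier can be rescued, but only by an argument you do not give. The exact twirl of an entrywise nonnegative $B$ commutes with all block permutations. Averaging a nonnegative top eigenvector over the group therefore gives a nonzero top eigenvector in the range of $\Pi_R^{<k}$, so the exact twirl's top eigenvalue equals $\Lambda_R^{(k)}$. The dyadic twirl is within the total-variation error of the exact one.

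A second slip: running the original circuit inside the sandwich realizes its Hermitian overlap $H$, not $M=(I+H)/2$. You need an extra $\ket{+}$ bit branching between the identity and the circuit, which gives raw overlap $(I+G)/2$. Otherwise you must redo the thresholds for $H$, which is entrywise nonnegative but not positive semidefinite.

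\textbf{Your Stage~2 takes a different route from the paper, and it is viable.} The paper runs an adaptive process. While some tested coordinate is Hellinger-far from independent of the others, it measures that block's tested copy, promotes a fresh copy, and tracks the potential $V-\mu H_{\mathrm{test}}$. It then tensorizes the local Hellinger bounds, which gives $R=O(k^2L/\eps^3)$ with $L=\sum_i\log\dim A_i$.

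Your non-adaptive Brand\~ao--Harrow conditioning also works, but two points need correcting.
\begin{itemize}
\item The telescoped bound for block $i$ is $H(W)/R$, where $W$ denotes the other tested registers, not $\log\dim A_i/R$. The chain rule sums to $I(X_{i,1},\ldots,X_{i,R};W\mid D)\le H(W)$.
\item If you treat parties one at a time, later conditioning can re-correlate earlier blocks. Instead, draw depths $t_1,\ldots,t_{k-1}$ independently and condition on copies $2,\ldots,t_i+1$ of every block; call these conditioned copies $C_t$. Then bound the total correlation $\sum_{i<k}I(X_{i,1};X_{i+1,1},\ldots,X_{k-1,1},X_k\mid C_t)$ of the conditional tested law. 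Telescope the $i$th term over $t_i$, treating the other blocks' conditioned copies $D$ as fixed side information.
\end{itemize}
Measuring only untested copies preserves $V$ exactly on average. Using $\KL\ge2\dH^2$ (rather than Pinsker) and square-root rounding of each conditional state then gives $\omega_+^{(k)}(M)\ge V(\rho)-2\sqrt{(k-1)L/R}$. So $R=O(kL/\eps^2)$ suffices, which improves on the paper's bound.
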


This is a verifier-level collapse, not merely a classical algorithm for estimating the product value.  Starting from a $k$-prover stoquastic verifier with an explicit inverse-polynomial promise gap, we construct a one-witness stoquastic verifier with a polynomially larger witness and an explicit inverse-polynomial gap.  The construction does not rely on black-box amplification and does not invoke a separate many-to-two prover-compression theorem.  Since $\stoqma\subseteq\am$, the result gives an $\am$ upper bound for the separable stoquastic sparse-Hamiltonian problem of Grilo and Rozos and, in fact, identifies it as complete for ordinary one-prover $\stoqma$ under the same promise-gap convention.

The conceptual contribution is a positive de Finetti theorem tailored to stoquastic tests.  Ordinary finite quantum de Finetti theorems and symmetric-extension hierarchies approximate permutation-invariant states, or the set of separable states, while allowing arbitrary complex amplitudes and hence interference \cite{DPS04,CKMR07,CKR09,NOP09,BCY11,BH13,LS15,HNW17,SW15,LW17,FF21}.  Those results are representation or relaxation theorems: their conclusions are stated in trace norm, in restricted measurement norms, or as convergence of semidefinite hierarchies.  Our theorem is different.  It is a one-sided, test-dependent value theorem for entrywise nonnegative tests.  It first measures the relevant registers in the computational basis and then rounds the resulting distribution to a product vector with square-root marginals.  This step is valid precisely because all entries of the test matrix are nonnegative.  For signed or complex tests, the same classical marginal distribution can support very different accepting values because of relative phases.  In this sense, the theorem isolates destructive interference as the obstruction that prevents a comparable collapse for general entangled verification.

The equality is meant throughout in the inverse-polynomial-gap sense formalized in \Cref{sec:model}.  The absence of a general amplification theorem for arbitrary stoquastic verifiers is therefore harmless: each transformation below tracks the promise gap explicitly.

\begin{figure}[t]
\centering
\resizebox{\textwidth}{!}{%
\begin{tikzpicture}[
  x=1cm,y=1cm,
  cls/.style={draw,rounded corners,align=center,minimum width=1.75cm,minimum height=.62cm,inner sep=3pt},
  emphcls/.style={draw,very thick,rounded corners,align=center,minimum width=5.25cm,minimum height=1.00cm,inner sep=4pt,fill=blue!4},
  arr/.style={-{Latex[length=2mm]},thick},
  oldarr/.style={-{Latex[length=2mm]},thick,dashed},
  every node/.style={font=\small}
]
  \node[cls] (ma) at (-1,0) {$\ma$};
  \node[emphcls] (stoq) at (3.95,0)
    {$\begin{gathered}\stoqma=\stoqma(2)=\stoqQMA(2)\\[-1pt]=\stoqma(k)\end{gathered}$};
  \node[cls] (am) at (9.0,1.35) {$\am$};
  \node[cls] (pp) at (9.0,-1.35) {$\pp$};
  \node[cls] (pspace) at (11.8,0) {$\pspace$};
  \node[cls] (exp) at (14.5,0) {$\expclass$};
  \node[cls] (nexp) at (17.15,0) {$\nexp$};
  \node[cls] (qma2) at (9.0,3.05) {$\qma(2)$};

  \draw[arr] (ma) -- (stoq);
  \draw[arr] (stoq.east) to[out=28,in=180] (am.west);
  \draw[arr] (stoq.east) to[out=-28,in=180] (pp.west);
  \draw[arr] (am.east) to[out=0,in=150] (pspace.north west);
  \draw[arr] (pp.east) to[out=0,in=210] (pspace.south west);
  \draw[arr] (pspace) -- (exp);
  \draw[arr] (exp) -- (nexp);
  \draw[arr] (stoq.north east) to[out=48,in=190] (qma2.west);
  \draw[arr] (qma2.east) to[out=0,in=112] (nexp.north);
  \draw[oldarr] (stoq.south east) .. controls +(1.1,-2.5) and +(-2.8,-2.1) .. (exp.south west);
\end{tikzpicture}%
}
\caption{Complexity landscape around the result.  The highlighted equality is proved here for inverse-polynomial promise gaps by the direct $k$-prover collapse in \Cref{thm:k-collapse}.  The dashed arrow is the direct BKS sum-of-squares simulation for nonnegative forms \cite{BKS14}, which gives an $\expclass$ upper bound in the exponentially large witness dimension and is superseded, for stoquastic verifiers, by the collapse.  The containment $\stoqma\subseteq\am$ follows from the stoquastic largest-eigenvalue/approximate-set-size framework of Bravyi--DiVincenzo--Oliveira--Terhal together with the original stoquastic-verification framework \cite{BBT06,BDOT08}; $\stoqma\subseteq\pp$ follows from $\stoqma\subseteq\qma\subseteq\pp$ \cite{MW05}.}
\label{fig:landscape}
\end{figure}

\paragraph{Proof overview.}
The proof has two steps, one analytic and one computational.

The analytic step is the positive de Finetti theorem.  Let $M$ be the compressed acceptance matrix of a $k$-prover stoquastic verifier on
\[
  A_1\otimes\cdots\otimes A_k .
\]
For $R\ge1$, symmetrically extend the first $k-1$ prover registers and write
\[
  \Pi_R^{<k}:=\Pi_R^{A_1}\otimes\cdots\otimes\Pi_R^{A_{k-1}}\otimes I_{A_k}.
\]
The relaxation is
\[
  \mathcal E_R^{(k)}(M)
  :=\Pi_R^{<k}
  (M_{A_{1,1}\cdots A_{k-1,1}A_k}\otimes I)
  \Pi_R^{<k} .
\]
Every product witness $x_1\otimes\cdots\otimes x_k$ lifts to
\[
  x_1^{\otimes R}\otimes\cdots\otimes x_{k-1}^{\otimes R}\otimes x_k,
\]
so the largest eigenvalue of $\mathcal E_R^{(k)}(M)$ is at least the original product value.  The de Finetti theorem proves the reverse inequality up to additive error $\eps$ once
\[
  R=1+O\!\left(\frac{k^2\sum_i\log\dim A_i}{\eps^3}\right).
\]
Thus the nonconvex product optimization is approximated by an ordinary largest-eigenvalue problem on a polynomially larger witness register whenever the original promise gap is inverse polynomial and $k$ is polynomially bounded.

The proof of this de Finetti theorem is an entropy-conditioning argument in the computational basis.  Let $\rho$ be a separately bosonic extension state and measure the tested registers, obtaining a distribution $P$ on $X_1,\ldots,X_k$.  If, for every $i<k$, the variable $X_i$ is Hellinger-close to being independent of the remaining variables, then a tensorization lemma shows that $P$ is close to $\prod_iP_{X_i}$.  Entrywise nonnegativity then permits direct rounding: the value of $\rho$ is dominated by the quadratic form of the vector $\sqrt P$, and $\sqrt P$ is close to the product vector with factors $\sqrt{P_{X_i}}$.  If some $X_i$ is not close to independent of the rest, then $I(X_i;X_{\bar i})$ is large.  Conditioning on one copy of the $i$th block preserves the tested value in expectation, because a fresh symmetric copy can be relabeled as the tested register, while decreasing the computational-basis entropy budget by this mutual information.  Since the entropy of the tested distribution is at most $\sum_i\log\dim A_i$, this conditioning cannot continue for too many rounds.

The computational step upgrades the spectral relaxation to an actual stoquastic verifier.  A spectral approximation alone would only yield a classical upper bound; it would not show that the language has a one-witness stoquastic proof system.  We therefore realize an inverse-polynomial perturbation of the relaxed operator as a Hermitian branch-overlap matrix.  Two bookkeeping points are essential.  First, the original compressed acceptance matrix $M$ itself can be used as a stoquastic overlap matrix: branching between the identity and the original reversible circuit realizes the affine transform from overlap to acceptance.  Second, the exact symmetric projectors average uniformly over $S_R$, whereas a stoquastic verifier branches over dyadic $\ket{+}$ registers.  We replace each uniform average by an inverse-invariant dyadic distribution that is inverse-polynomially close to uniform, preserving self-adjointness and stoquastic implementability.

Combining these steps proves $\stoqma(k)\subseteq\stoqma$ directly for every polynomially bounded $k$.  The reverse inclusion is immediate, since a $k$-prover verifier may ignore $k-1$ witnesses.

\section{Model and preliminaries}\label{sec:model}

All vector spaces are finite-dimensional and equipped with the computational basis.  A matrix is \emph{entrywise nonnegative} if all of its computational-basis entries are nonnegative real numbers.  All logarithms are natural; changing the base only affects absolute constants.  For probability distributions $p,q$ on a finite set, we write $\dH$ for Hellinger distance, normalized by
\[
  \dH(p,q)^2:=1-\sum_x\sqrt{p(x)q(x)} .
\]
With this normalization, $\|\sqrt p-\sqrt q\|_2^2=2\dH(p,q)^2$.  Entropy is Shannon entropy, measured in the same logarithmic base.

\subsection{Branch-overlap form of stoquastic verification}

We use the branch-overlap normal form for stoquastic verification.  This is equivalent, up to standard reversible-computation overhead, to the usual model with classical reversible gates, $\ket{0}$ and $\ket{+}$ ancillas, and one final Hadamard-basis measurement \cite{BBT06,LW26}.  The normal form is the right one for this paper because it is closed under the operations used below: adding branch bits, routing tensor factors by reversible permutations, composing reversible subroutines, and taking dyadic convex combinations.

\begin{definition}[Branch-overlap verifier]\label{def:branch-overlap-verifier}
A branch-overlap verifier on witness register $W$ consists of an initialized ancilla state
\[
  \ket{\eta}=\ket{0^z}\ket{+^r}
\]
and a polynomial-size classical reversible circuit $C_x$, viewed as a permutation matrix on $W$ and the ancillas.  Its raw compressed overlap is
\[
  G_x:=(I_W\otimes\bra{\eta})C_x(I_W\otimes\ket{\eta}),
\]
and its Hermitian overlap is
\[
  H_x:=\frac{G_x+G_x^T}{2}.
\]
Equivalently, this is implemented in the standard model by controlling $C_x$ from an output qubit initialized in $\ket{+}$ and measuring that output qubit in the Hadamard basis.  The verifier accepts a witness vector $\psi$ with probability
\[
  p_{\mathrm{acc}}(\psi)=\frac12\left(1+\langle\psi,H_x\psi\rangle\right).
\]
Equivalently, the compressed acceptance matrix is
\[
  M_x:=\frac{I_W+H_x}{2}.
\]
\end{definition}

\begin{lemma}[Basic properties of branch-overlap matrices]\label{lem:branch-basic}
For every branch-overlap verifier, $G_x$ and $H_x$ are entrywise nonnegative real matrices, $H_x$ is real symmetric, $\norm{H_x}\le1$, and
\[
  0\preceq M_x\preceq I.
\]
Every verifier constructed in this paper is obtained from this normal form by adding branch bits, reversible routing, and reversible subroutines, and is therefore a valid $\stoqma$ verifier in the standard sense.
\end{lemma}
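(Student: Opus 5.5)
The plan is to read off every property directly from the matrix entries of $G_x$ in the computational basis. Since $C_x$ is a permutation matrix, all its entries are $0$ or $1$. The vector $\ket{\eta}=\ket{0^z}\ket{+^r}$ has entries that are nonnegative reals: the value $2^{-r/2}$ on basis strings whose first $z$ bits are zero, and $0$ elsewhere. So for basis vectors $\ket{w},\ket{w'}$ of $W$ we get
\[
  (G_x)_{w',w}=\sum_{a,b}\eta_{a}\eta_{b}\,(C_x)_{(w',a),(w,b)}\ \ge 0 .
\]
Here $\eta_a,\eta_b$ are the real amplitudes of $\ket{\eta}$. Hence $G_x$ and $G_x^T$ are entrywise nonnegative and real, and so is $H_x=(G_x+G_x^T)/2$. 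Moreover $H_x$ is real symmetric by construction.

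For the norm bound, I will use that $I_W\otimes\ket{\eta}$ is an isometry $V$ from $W$ into $W\otimes(\text{ancillas})$, because $\norm{\eta}=1$. Then $G_x=V^\dagger C_xV$, where $C_x$ is unitary, being a permutation. This gives $\norm{G_x}\le\norm{V}^2\norm{C_x}=1$, and also $\norm{G_x^T}=\norm{G_x}\le 1$. The triangle inequality then yields $\norm{H_x}\le 1$. Since $H_x$ is Hermitian with spectrum in $[-1,1]$, the matrix $M_x=(I+H_x)/2$ has spectrum in $[0,1]$, which is exactly $0\preceq M_x\preceq I$. The same bound confirms that $p_{\mathrm{acc}}(\psi)\in[0,1]$ for unit $\psi$.

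For the final sentence about $\stoqma$ validity, I would verify the stated implementation. Prepare a control qubit in $\ket{+}$, apply controlled-$C_x$, and measure the control in the Hadamard basis. Controlled-$C_x$ is itself a classical reversible permutation, so this is a standard stoquastic verifier (\cite{BBT06}). A direct computation gives an acceptance probability of $\tfrac12+\tfrac12\mathrm{Re}\langle\psi\otimes\eta,C_x\,\psi\otimes\eta\rangle$. For real $\psi$ this equals $\tfrac12(1+\langle\psi,H_x\psi\rangle)$, since $\langle\psi,G_x\psi\rangle=\langle\psi,G_x^T\psi\rangle$ for real vectors. For complex $\psi$, the real part of $\langle\psi,G_x\psi\rangle$ is $\langle\psi,\tfrac{G_x+G_x^\dagger}{2}\psi\rangle$, and this equals $H_x$ because $G_x$ is real.

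Closure is the remaining point: adding $\ket{+}$ branch bits, reversible routing permutations, and reversible subroutines all keep the circuit a classical reversible circuit with ancillas of the form $\ket{0^{z'}}\ket{+^{r'}}$. Such a circuit is again in normal form, so the above argument applies to it. The only place needing care is the phase bookkeeping in the controlled-circuit identity, in particular that the off-diagonal interference term is exactly $\mathrm{Re}\langle\cdot,C_x\cdot\rangle$. This is routine, so I expect no real obstacle.
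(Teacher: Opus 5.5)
Your proposal is correct and follows essentially the same route as the paper. Nonnegativity comes from compressing a permutation matrix against the nonnegative vector $\ket{\eta}$, and $\norm{G_x}\le1$ holds because $G_x$ is a compression of a unitary. The $\stoqma$ implementation uses a $\ket{+}$-initialized control qubit, controlled-$C_x$, and a Hadamard-basis measurement. The only cosmetic difference is that you compute the acceptance probability by projecting onto $\ket{+}$ and taking $\Re\langle\psi\otimes\eta,C_x\,\psi\otimes\eta\rangle$, whereas the paper conjugates $X_O$ by the controlled circuit; both yield $\langle\psi,H_x\psi\rangle$ because $G_x$ is real.
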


\begin{proof}
The circuit $C_x$ is a permutation matrix, and the initialized vector $\ket{\eta}$ has nonnegative amplitudes.  Hence every entry of $G_x$ is a sum of nonnegative branch weights, so $G_x$ and $H_x$ are entrywise nonnegative.  Since $G_x$ is a compression of a unitary permutation matrix, $\norm{G_x}\le1$, and therefore
\[
  \norm{H_x}
  =
  \left\|\frac{G_x+G_x^T}{2}\right\|
  \le1.
\]
Thus all eigenvalues of $H_x$ lie in $[-1,1]$, which is equivalent to $0\preceq (I+H_x)/2\preceq I$.

It remains to justify that this presentation is a standard stoquastic verifier rather than a new model.  Add one output qubit $O$ initialized in $\ket{+}$ and apply the reversible circuit
\[
  U:=\ket{0}\!\bra{0}_O\otimes I+\ket{1}\!\bra{1}_O\otimes C_x .
\]
Measuring $O$ in the Hadamard basis accepts with probability
\[
  \frac12\left(1+\langle +,\psi,\eta|U^\dagger X_OU|+,\psi,\eta\rangle\right).
\]
Since
\[
  U^\dagger X_OU
  =\ket{0}\!\bra{1}_O\otimes C_x+\ket{1}\!\bra{0}_O\otimes C_x^{-1},
\]
and $C_x^{-1}=C_x^T$, the expectation of $X_O$ is exactly
\[
  \frac12\langle\psi, G_x\psi\rangle+
  \frac12\langle\psi, G_x^T\psi\rangle
  =\langle\psi,H_x\psi\rangle .
\]
Thus the acceptance rule in \Cref{def:branch-overlap-verifier} is implemented by the usual reversible-circuit stoquastic verifier.  The only transformations used below are controlled choices among classical reversible circuits, reversible routing of registers, and the addition of $\ket{0}$ or $\ket{+}$ ancillas, so every constructed verifier stays in the standard model.
\end{proof}

\begin{remark}[Overlap matrices versus acceptance matrices]\label{rem:overlap-vs-acceptance}
The Hermitian overlap $H_x$ is the matrix directly realized by a branch-overlap circuit; the verifier's acceptance matrix is the affine transform $(I+H_x)/2$.  Conversely, if a nonnegative Hermitian contraction $K$ can be realized as a Hermitian overlap, then the resulting stoquastic verifier has compressed acceptance matrix $(I+K)/2$.  This affine shift only rescales promise gaps by a factor of two, but it is the key bookkeeping point in \Cref{sec:stoq-implementation}: the symmetric-extension operator is realized as an overlap matrix, not as a raw acceptance matrix.
\end{remark}

For $k$ unentangled witnesses, the witness register decomposes as $W=W_1\otimes\cdots\otimes W_k$, and the value is the maximum of $\langle\psi,M_x\psi\rangle$ over product states $\psi=\psi_1\otimes\cdots\otimes\psi_k$.  Because $M_x$ is entrywise nonnegative, replacing each witness by the vector of absolute values can only increase the value.  Thus the optimization may be restricted to nonnegative product vectors.

\begin{definition}[Inverse-polynomial-gap stoquastic proof classes]\label{def:stoqma-k}
Fix a polynomially bounded function $k=k(n)$.  A promise problem is in $\stoqma(k)$ if there are polynomially many total witness qubits, a polynomial-size uniform family of $k(n)$-witness stoquastic verifiers, and polynomial-time computable parameters $c(n)>s(n)$ with $c(n)-s(n)\ge1/\poly(n)$ such that yes instances have a product witness accepted with probability at least $c(n)$, while no instances have every product witness accepted with probability at most $s(n)$.  We write $\stoqma=\stoqma(1)$.
\end{definition}

\begin{remark}[No black-box amplification]
The use of explicit inverse-polynomial completeness--soundness gaps is deliberate.  General black-box error reduction for stoquastic verification is subtle and is not used here.  Each reduction in this paper maps a verifier with a known inverse-polynomial gap to another verifier with a known inverse-polynomial gap.
\end{remark}

\subsection{Nonnegative product values}

Let $A_1,\ldots,A_m$ be finite-dimensional computational-basis spaces and let $M$ be an entrywise nonnegative real-symmetric positive semidefinite contraction on
\[
  A_1\otimes\cdots\otimes A_m .
\]
Define its nonnegative $m$-fold product value by
\[
  \omega_+^{(m)}(M)
  :=
  \max_{x_i\ge0,\ \norm{x_i}_2=1}
  \left\langle \bigotimes_{i=1}^m x_i,
  M\bigotimes_{i=1}^m x_i\right\rangle .
\]
When $m$ is clear, we write simply $\omega_+(M)$.  For an $m$-witness stoquastic verifier with compressed acceptance matrix $M_x$, the maximum acceptance probability is exactly $\omega_+^{(m)}(M_x)$, because replacing every witness by the vector of absolute values can only increase the value of an entrywise nonnegative matrix.

\subsection{Classical distances}

For distributions $p,q$ on a finite set, we use
\[
  \dH(p,q)^2:=1-\sum_x\sqrt{p(x)q(x)}
\]
for squared Hellinger distance and
\[
  \KL(p\Vert q):=\sum_x p(x)\log\frac{p(x)}{q(x)}
\]
for relative entropy, with the convention $\KL(p\Vert q)=+\infty$ if $p$ is not absolutely continuous with respect to $q$.  We write $H(p):=-\sum_x p(x)\log p(x)$ for Shannon entropy.

\begin{lemma}[Hellinger--KL lower bound]\label{lem:hellinger-kl}
For probability distributions $p,q$ on a finite set,
\[
  \KL(p\Vert q)\ge 2\dH(p,q)^2 .
\]
\end{lemma}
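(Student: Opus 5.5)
The plan is to prove the inequality through the standard bound $-\log t\ge 1-t$ for $t>0$, applied under the expectation with respect to $p$.

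First dispose of the degenerate case. If $p$ is not absolutely continuous with respect to $q$, then $\KL(p\Vert q)=+\infty$ by convention, and the inequality holds trivially. Otherwise, restrict all sums to the support $S=\{x:p(x)>0\}$, on which $q(x)>0$.

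On $S$, write
\[
  \KL(p\Vert q)=\sum_{x\in S}p(x)\log\frac{p(x)}{q(x)}
  =-2\sum_{x\in S}p(x)\log\sqrt{\frac{q(x)}{p(x)}} .
\]
Applying $-\log t\ge 1-t$ with $t=\sqrt{q(x)/p(x)}$ termwise gives
\[
  \KL(p\Vert q)\ge 2\sum_{x\in S}p(x)\left(1-\sqrt{\frac{q(x)}{p(x)}}\right)
  =2\left(1-\sum_{x\in S}\sqrt{p(x)q(x)}\right).
\]
Since $\sqrt{p(x)q(x)}=0$ off $S$, the last sum equals $\sum_x\sqrt{p(x)q(x)}$. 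Hence the right-hand side is exactly $2\dH(p,q)^2$ under the paper's normalization.

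The argument is a one-line concavity bound and has no genuine obstacle. The only points needing care are the support bookkeeping and the convention for infinite divergence. The choice of natural logarithm matters only through $-\log t\ge1-t$, which is stated for the natural log, consistent with the paper's convention.
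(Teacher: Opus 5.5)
Your proof is correct and is essentially the paper's argument: applying $-\log s\ge 1-s$ at $s=\sqrt{q(x)/p(x)}$ is the same as the paper's inequality $\log t\ge 2(1-t^{-1/2})$ at $t=p(x)/q(x)$, used termwise and then summed. Your restriction to the support of $p$ is slightly more careful than the paper's version, which leaves $t=p(x)/q(x)$ undefined when $p(x)=q(x)=0$.
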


\begin{proof}
If $p$ is not absolutely continuous with respect to $q$, the claim is immediate.  Otherwise, for every $t>0$,
\[
  \log t\ge 2(1-t^{-1/2}),
\]
because the difference between the two sides has derivative $(\sqrt t-1)/t^{3/2}$ and its unique minimum is $0$ at $t=1$.  Applying the inequality to $t=p(x)/q(x)$ and multiplying by $p(x)$ gives
\[
  p(x)\log\frac{p(x)}{q(x)}
  \ge
  2\bigl(p(x)-\sqrt{p(x)q(x)}\bigr).
\]
Summing over $x$ yields the result.
\end{proof}

\section{A positive de Finetti theorem for stoquastic product values}\label{sec:extension}

This section proves the analytic core of the collapse.  The result should be read as a de Finetti theorem only in a value sense.  We do not approximate an arbitrary symmetric extension state in trace norm by a mixture of product states; such a statement would be far stronger than what is needed here and is false at the low-copy parameters used in the collapse.  Instead we prove a one-sided statement tailored to stoquastic verification: when the test matrix is entrywise nonnegative, every separately bosonic extension state has nearly the same tested value as some nonnegative product vector.  The restriction to nonnegative tests is essential: it is exactly what prevents destructive interference from amplifying off-diagonal coherences beyond what is visible in the measured computational-basis distribution.

Fix an integer $m\ge2$ and finite-dimensional computational-basis spaces $A_1,\ldots,A_m$, with $d_i:=\dim A_i$.  For a space $A$ and an integer $R\ge1$, let
\[
  \Pi_R^A:=\frac1{R!}\sum_{\tau\in S_R}U_\tau^A
\]
be the orthogonal projector from $A^{\otimes R}$ onto $\sym^R(A)$, where $U_\tau^A$ permutes the $R$ tensor factors.  For $1\le i<m$ we write
\[
  A_i^{\otimes R}=A_{i,1}\otimes\cdots\otimes A_{i,R},
\]
and define
\[
  \Pi_R^{<m}:=\Pi_R^{A_1}\otimes\cdots\otimes\Pi_R^{A_{m-1}}\otimes I_{A_m}.
\]
For an operator $M$ on $A_1\otimes\cdots\otimes A_m$, let
\begin{equation}\label{eq:symmetric-extension-operator}
  \mathcal E_R^{(m)}(M)
  :=
  \Pi_R^{<m}
  \bigl(M_{A_{1,1}\cdots A_{m-1,1}A_m}\otimes I_{\mathrm{rest}}\bigr)
  \Pi_R^{<m},
\end{equation}
where $I_{\mathrm{rest}}$ is the identity on all registers except $A_{1,1},\ldots,A_{m-1,1},A_m$.  Finally set
\[
  \Lambda_R^{(m)}(M):=\lambda_{\max}(\mathcal E_R^{(m)}(M)).
\]
If $0\preceq M\preceq I$, then $0\preceq\mathcal E_R^{(m)}(M)\preceq I$, since \eqref{eq:symmetric-extension-operator} is the compression of $M_{A_{1,1}\cdots A_{m-1,1}A_m}\otimes I_{\mathrm{rest}}$ to the range of $\Pi_R^{<m}$.

\begin{lemma}[Product states lift]\label{lem:product-lift}
For every real-symmetric $M$ and every $R\ge1$,
\[
  \omega_+^{(m)}(M)\le \Lambda_R^{(m)}(M).
\]
\end{lemma}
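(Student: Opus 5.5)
The plan is a variational argument: lift an optimal product witness to a unit vector in the range of $\Pi_R^{<m}$ whose Rayleigh quotient for $\mathcal E_R^{(m)}(M)$ equals the product value. Fix nonnegative unit vectors $x_1,\ldots,x_m$ attaining $\omega_+^{(m)}(M)$; the maximum exists by compactness of the product of unit spheres intersected with the nonnegative orthant. Define
\[
  v:=x_1^{\otimes R}\otimes\cdots\otimes x_{m-1}^{\otimes R}\otimes x_m ,
\]
a vector on $A_1^{\otimes R}\otimes\cdots\otimes A_{m-1}^{\otimes R}\otimes A_m$. It has unit norm, since each tensor factor does.

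\textbf{Step 1: $v$ lies in the range of $\Pi_R^{<m}$.} Each $x_i^{\otimes R}$ is invariant under every $U_\tau^{A_i}$, so $\Pi_R^{A_i}x_i^{\otimes R}=x_i^{\otimes R}$. Tensoring with the identity on $A_m$ gives $\Pi_R^{<m}v=v$.

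\textbf{Step 2: compute the Rayleigh quotient.} By Step 1,
\[
  \langle v,\mathcal E_R^{(m)}(M)v\rangle=\langle v,(M_{A_{1,1}\cdots A_{m-1,1}A_m}\otimes I_{\mathrm{rest}})v\rangle .
\]
After reordering the tensor factors, $v$ factors as $(x_1\otimes\cdots\otimes x_m)$ on the tested registers $A_{1,1},\ldots,A_{m-1,1},A_m$, tensored with the unit vector $\bigotimes_{i<m}x_i^{\otimes(R-1)}$ on the remaining registers. The quantity above therefore equals
\[
  \langle\textstyle\bigotimes_i x_i,M\bigotimes_i x_i\rangle=\omega_+^{(m)}(M).
\]

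\textbf{Step 3: apply the variational principle.} Since $\mathcal E_R^{(m)}(M)$ is real symmetric whenever $M$ is (the projectors $\Pi_R^{<m}$ are symmetric), its largest eigenvalue is at least the Rayleigh quotient at any unit vector. This gives $\Lambda_R^{(m)}(M)\ge\omega_+^{(m)}(M)$.

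There is no real obstacle. The only point that needs care is the register bookkeeping in Step 2: the tested copy $A_{i,1}$ of each extended block carries exactly $x_i$, and the untested copies contribute a factor of norm one. Positivity and contractivity of $M$ are not needed, only symmetry.
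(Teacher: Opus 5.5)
Your proposal is correct and matches the paper's proof: lift to $x_1^{\otimes R}\otimes\cdots\otimes x_{m-1}^{\otimes R}\otimes x_m$, observe that it lies in the range of $\Pi_R^{<m}$, and note that its Rayleigh quotient for $\mathcal E_R^{(m)}(M)$ equals the product value. The only cosmetic difference is that you fix an optimizer by compactness at the start, whereas the paper takes arbitrary nonnegative unit vectors and maximizes at the end.
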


\begin{proof}
Let $x_1,\ldots,x_m$ be nonnegative unit vectors.  The lifted vector
\[
  \widehat x:=x_1^{\otimes R}\otimes\cdots\otimes x_{m-1}^{\otimes R}\otimes x_m
\]
is a unit vector in the range of $\Pi_R^{<m}$.  Therefore
\[
  \langle \widehat x,\mathcal E_R^{(m)}(M)\widehat x\rangle
  =
  \left\langle \bigotimes_{i=1}^m x_i,
  M\bigotimes_{i=1}^m x_i\right\rangle .
\]
Taking the maximum over all nonnegative product vectors gives the claim.
\end{proof}

We next introduce the form of symmetric extension used in the proof.  If $r_1,\ldots,r_{m-1}\ge1$, a state $\rho$ on
\[
  A_1^{\otimes r_1}\otimes\cdots\otimes A_{m-1}^{\otimes r_{m-1}}\otimes A_m
\]
is called \emph{separately bosonic} if it is supported on
\[
  \sym^{r_1}(A_1)\otimes\cdots\otimes\sym^{r_{m-1}}(A_{m-1})\otimes A_m .
\]
This support condition implies, in particular, that $\rho$ is invariant under independent permutations of the copies in each of the first $m-1$ blocks.  For such a state define the tested marginal
\[
  \rho_{\mathrm{test}}
  :=\rho_{A_{1,1}\cdots A_{m-1,1}A_m}
\]
and its tested value
\[
  V(\rho):=\Tr[M\rho_{\mathrm{test}}].
\]
Let $P_\rho$ be the classical distribution obtained by measuring the tested registers $A_{1,1},\ldots,A_{m-1,1},A_m$ in the computational basis, and set
\[
  H_{\mathrm{test}}(\rho):=H(P_\rho).
\]
When $X=(X_1,\ldots,X_m)\sim P_\rho$, we write $p_i$ for the marginal law of $X_i$ and $p_{\bar i}$ for the joint law of all variables except $X_i$.

\begin{theorem}[Positive de Finetti theorem, value form]\label{thm:positive-definetti}
Let $M$ be an entrywise nonnegative real-symmetric matrix on $A_1\otimes\cdots\otimes A_m$ satisfying $0\preceq M\preceq I$.  Let $0<\eps\le1$, and set
\[
  L:=\max\left\{1,\sum_{i=1}^m\log d_i\right\},
  \qquad
  R:=1+\left\lceil\frac{128L(m-1)^2}{\eps^3}\right\rceil .
\]
For every separately bosonic state $\rho$ on
\[
  A_1^{\otimes R}\otimes\cdots\otimes A_{m-1}^{\otimes R}\otimes A_m,
\]
there are nonnegative unit vectors $x_i\in A_i$ such that
\[
  \left\langle \bigotimes_{i=1}^m x_i,
  M\bigotimes_{i=1}^m x_i\right\rangle
  \ge
  V(\rho)-\eps .
\]
\end{theorem}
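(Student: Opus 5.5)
Our plan follows the entropy-conditioning strategy sketched in the proof overview. The argument has three parts: a rounding step, a conditioning step, and a budget count.

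\textbf{Rounding step.} Given a separately bosonic state $\sigma$ (with at least one remaining copy per block), let $P=P_\sigma$ be its tested distribution.

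First we bound the value of $\sigma$ by the classical vector $\sqrt P$. Because $M$ is entrywise nonnegative,
\[
V(\sigma)=\sum_{x,y}M_{xy}\,(\sigma_{\mathrm{test}})_{xy}\le\sum_{x,y}M_{xy}\abs{(\sigma_{\mathrm{test}})_{xy}}\le\sum_{x,y}M_{xy}\sqrt{P(x)P(y)}=\langle\sqrt P,M\sqrt P\rangle .
\]
The last inequality uses positivity of $\sigma_{\mathrm{test}}$, which gives $\abs{\sigma_{xy}}^2\le\sigma_{xx}\sigma_{yy}$.

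Next we compare $\sqrt P$ with the product vector $x:=\bigotimes_i\sqrt{p_i}$. Since $\norm{M}\le1$,
\[
\abs{\langle\sqrt P,M\sqrt P\rangle-\langle x,Mx\rangle}\le 2\norm{\sqrt P-x}_2=2\sqrt2\,\dH\Bigl(P,\textstyle\prod_ip_i\Bigr).
\]
To control this Hellinger distance we use a tensorization bound. Peel off $X_1,\dots,X_{m-1}$ one at a time and apply the triangle inequality for $\dH$. We also use that Hellinger distance is preserved under tensoring with a common factor. This gives
\[
\dH\Bigl(P,\textstyle\prod_ip_i\Bigr)\le\sum_{i<m}\dH\bigl(P_{X_iX_{>i}},\,p_i\otimes P_{X_{>i}}\bigr).
\]
Each term is then bounded through \Cref{lem:hellinger-kl} by $\sqrt{I(X_i;X_{>i})/2}$, and $I(X_i;X_{>i})\le I(X_i;X_{\bar i})$. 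Consequently, if $I(X_i;X_{\bar i})\le\delta$ for every $i<m$, the rounding loses at most $2\sqrt2(m-1)\sqrt{\delta/2}=2(m-1)\sqrt\delta$. We choose $\delta:=\eps^2/(16(m-1)^2)$, so this loss is at most $\eps/2$.

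\textbf{Conditioning step.} Suppose instead that some $i<m$ has $I(X_i;X_{\bar i})>\delta$. We measure an untested copy $A_{i,j}$ with $j\ge2$ in the computational basis, obtaining outcome $y$, and pass to the post-measurement state $\sigma_y$.

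Measuring in the computational basis and discarding $A_{i,j}$ preserves the support on symmetric subspaces of the remaining copies. Hence $\sigma_y$ is separately bosonic with one fewer copy in block $i$.

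By bosonic symmetry, the pair $(A_{i,j},\text{tested registers})$ has the same law as $(A_{i,1},\text{tested registers with }A_{i,1}\text{ replaced by the fresh copy})$. We will make this precise as follows. Averaged over $y$, $\sigma_y$ has tested marginal equal to the tested marginal of $\sigma$ after a dephasing of one copy that is not tested, so $\E_yV(\sigma_y)=V(\sigma)$. For the entropy, we relabel: the outcome $Y$ is exchangeable with $X_i$, so $H(X\mid Y)\le H(X)-I(Y;X)$. Here $I(Y;X)\ge I(Y;X_{\bar i})$, and by exchangeability $I(Y;X_{\bar i})$ equals $I(X_i;X_{\bar i})>\delta$. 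Thus the expected tested entropy drops by more than $\delta$ per step.

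This step is the main obstacle, for two reasons. First, the exchangeability of $Y$ with $X_i$ must be checked for quantum states: the computational-basis measurement of copies of a symmetric state yields an exchangeable classical sequence, which is what we need. Second, $\E V(\sigma_y)=V(\sigma)$ requires a value-preservation argument in expectation and not just pointwise.

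\textbf{Budget count.} We run an adaptive process with potential $H_{\mathrm{test}}$, starting at most $L$ and always nonnegative. The process stops at the first round in which all $i$ satisfy $I(X_i;X_{\bar i})\le\delta$. By the supermartingale (optional stopping) argument, the expected number of conditioning rounds is at most $L/\delta$.

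A bounded expectation does not prevent individual branches from running long, so we use Markov's inequality. With probability at least $1-\eps/2$ the process stops within $T:=2L/(\eps\delta)=32L(m-1)^2/\eps^3$ rounds, which fits inside the $R-1$ spare copies of each block. On branches that do not stop in time we lose value at most $1$, contributing at most $\eps/2$ in expectation. The stopped branches lose at most $\eps/2$ by the rounding step.

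Since the value is preserved in expectation throughout, the expected rounded product value is at least $V(\rho)-\eps$. Hence some branch yields nonnegative unit vectors $x_i=\sqrt{p_i}$ achieving the bound, with room to spare in the constant $128$.
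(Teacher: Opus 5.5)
Your proposal is correct and follows the paper's argument. The shared ingredients are square-root rounding, Hellinger tensorization via the Hellinger--KL bound, and conditioning on a symmetric copy so that the value is preserved in expectation while the tested entropy drops by $I(X_i;X_{\bar i})$. The only difference is bookkeeping: the paper thresholds on $\dH(P,p_i\otimes p_{\bar i})$ and follows one branch chosen greedily for the combined potential $\Phi=V-\mu H_{\mathrm{test}}$. You instead threshold on $I(X_i;X_{\bar i})$ and average over the whole random outcome tree (value martingale, entropy supermartingale, optional stopping, Markov on the number of rounds), truncating the process at $T\le R-1$, and this also fits within the stated $R$.
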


\begin{remark}[Comparison with ordinary de Finetti theorems]\label{rem:positive-vs-ordinary-definetti}
The theorem should not be read as a trace-norm approximation of $\rho_{\mathrm{test}}$ by separable states.  Such a statement would be much stronger and would have dimension dependence of a different nature in the usual finite quantum de Finetti setting \cite{CKMR07}.  Symmetric-extension hierarchies for separability, and de Finetti theorems under restricted measurements, allow arbitrary complex states and therefore retain interference effects \cite{DPS04,NOP09,BCY11,BH13,LS15,HNW17,LW17}.  Here the conclusion is only about the value of a fixed entrywise nonnegative test.  This is exactly the form needed for stoquastic verification, and it is the point at which positivity buys a polynomial-copy theorem: after measuring in the computational basis, entrywise nonnegativity makes the vector of square roots of the measured distribution dominate all possible phases of the tested state.
\end{remark}

The proof has three ingredients.  First, if the tested computational-basis distribution is close to the product of its marginals, positivity of $M$ lets us round to the square roots of those marginals.  Second, if a tested coordinate is correlated with the rest, then conditioning on one copy of that block preserves the tested value in expectation while lowering the tested entropy.  Finally, the entropy budget is finite, so repeated conditioning must eventually reach the first case.

\begin{lemma}[Direct rounding]\label{lem:direct-rounding}
Assume that $M$ is entrywise nonnegative and real symmetric with $\norm M\le1$.  Let $\rho$ be separately bosonic, let $P=P_\rho$, and let $p_i$ be the one-coordinate marginals of $P$.  If
\[
  \dH\left(P,\prod_{i=1}^m p_i\right)\le\gamma,
\]
then
\[
  \omega_+^{(m)}(M)\ge V(\rho)-2\sqrt2\,\gamma .
\]
\end{lemma}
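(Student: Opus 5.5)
The plan is to compare $V(\rho)$ with the quadratic form of $M$ at the vector $\sqrt P$, and then compare $\sqrt P$ with the product vector $\bigotimes_i\sqrt{p_i}$. The candidate witnesses are $x_i:=\sqrt{p_i}$. These are nonnegative unit vectors in $A_i$, and their tensor product is the entrywise square root of $\prod_i p_i$.

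\textbf{Step 1: $V(\rho)\le\langle\sqrt P,M\sqrt P\rangle$.} Write $\sigma:=\rho_{\mathrm{test}}$, so that $P(a)=\sigma_{aa}$ for computational basis indices $a$. Since $\sigma\succeq0$, every $2\times2$ principal minor is nonnegative, which gives $|\sigma_{ab}|\le\sqrt{\sigma_{aa}\sigma_{bb}}=\sqrt{P(a)P(b)}$. Because $M$ is real symmetric and entrywise nonnegative,
\[
  V(\rho)=\sum_{a,b}M_{ab}\sigma_{ba}
  \le\sum_{a,b}M_{ab}\abs{\sigma_{ba}}
  \le\sum_{a,b}M_{ab}\sqrt{P(a)P(b)}
  =\langle\sqrt P,M\sqrt P\rangle .
\]
Here $V(\rho)$ is real, so it equals its real part, and the triangle inequality applies. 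This is the only place positivity of $M$ is used, and it is where destructive interference is excluded. Separate bosonicity is not needed in this step. It only matters for which marginal is called tested.

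\textbf{Step 2: perturbation.} Set $u:=\sqrt P$ and $v:=\sqrt{\prod_i p_i}=\bigotimes_i x_i$. Both are unit vectors, and by the normalization in \Cref{sec:model},
\[
  \norm{u-v}_2=\sqrt2\,\dH\Bigl(P,\prod_i p_i\Bigr)\le\sqrt2\,\gamma .
\]
Expanding $\langle u,Mu\rangle-\langle v,Mv\rangle=\langle u-v,Mu\rangle+\langle v,M(u-v)\rangle$ and using $\norm M\le1$ together with Cauchy--Schwarz gives
\[
  \abs{\langle u,Mu\rangle-\langle v,Mv\rangle}\le2\norm{u-v}_2\le2\sqrt2\,\gamma .
\]

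\textbf{Step 3: combine.} Putting the steps together,
\[
  \omega_+^{(m)}(M)\ge\langle v,Mv\rangle\ge\langle u,Mu\rangle-2\sqrt2\,\gamma\ge V(\rho)-2\sqrt2\,\gamma .
\]
The only substantive step is Step 1, the domination of the off-diagonal coherences by $\sqrt{P(a)P(b)}$. It is standard, so I expect no real obstacle. The main care point is bookkeeping: checking that the constant $2\sqrt2$ matches the Hellinger normalization $\norm{\sqrt p-\sqrt q}_2^2=2\dH(p,q)^2$.
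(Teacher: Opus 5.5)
Your proposal is correct and follows the paper's proof essentially line for line. It uses the same domination $V(\rho)\le\langle\sqrt P,M\sqrt P\rangle$ via the $2\times2$ principal minors of $\rho_{\mathrm{test}}$, the same product vector $\bigotimes_i\sqrt{p_i}$, and the same bound $2\norm{u-v}_2\le2\sqrt2\,\gamma$ from the Hellinger normalization; the only cosmetic difference is that you split $\langle u,Mu\rangle-\langle v,Mv\rangle$ as $\langle u-v,Mu\rangle+\langle v,M(u-v)\rangle$ rather than as $\langle u-v,M(u+v)\rangle$, which avoids using symmetry of $M$ and gives the same constant.
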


\begin{proof}
Write computational-basis elements of $A_1\otimes\cdots\otimes A_m$ as $\alpha=(a_1,\ldots,a_m)$.  Define real unit vectors
\[
  y_\alpha:=\sqrt{P(\alpha)},
  \qquad
  z_\alpha:=\sqrt{\prod_{i=1}^m p_i(a_i)} .
\]
The vector $z$ factors as $x_1\otimes\cdots\otimes x_m$, where $x_i(a_i)=\sqrt{p_i(a_i)}$.  Hence
\begin{equation}\label{eq:multi-z-product-value}
  \omega_+^{(m)}(M)
  \ge \langle z,Mz\rangle .
\end{equation}
Let $\rho_{\mathrm{test}}=\rho_{A_{1,1}\cdots A_{m-1,1}A_m}$.  Positivity of $\rho_{\mathrm{test}}$ implies, for all $\alpha,\beta$,
\[
  \abs{(\rho_{\mathrm{test}})_{\alpha\beta}}^2
  \le
  (\rho_{\mathrm{test}})_{\alpha\alpha}(\rho_{\mathrm{test}})_{\beta\beta}
  =P(\alpha)P(\beta).
\]
Since $M_{\alpha\beta}\ge0$ and $V(\rho)=\Tr[M\rho_{\mathrm{test}}]$ is real,
\begin{equation}\label{eq:multi-y-dominates-rho}
\begin{aligned}
  V(\rho)
  &=\Re\sum_{\alpha,\beta}M_{\alpha\beta}(\rho_{\mathrm{test}})_{\beta\alpha} \\
  &\le \sum_{\alpha,\beta}M_{\alpha\beta}
        \abs{(\rho_{\mathrm{test}})_{\beta\alpha}} \\
  &\le \sum_{\alpha,\beta}M_{\alpha\beta}\sqrt{P(\alpha)P(\beta)}
   =\langle y,My\rangle .
\end{aligned}
\end{equation}
Using \eqref{eq:multi-z-product-value}, \eqref{eq:multi-y-dominates-rho}, and $\norm M\le1$,
\[
\begin{aligned}
  V(\rho)-\omega_+^{(m)}(M)
  &\le \langle y,My\rangle-\langle z,Mz\rangle \\
  &=\langle y-z,M(y+z)\rangle \\
  &\le \norm{y-z}_2\norm{y+z}_2
   \le 2\norm{y-z}_2 .
\end{aligned}
\]
Finally, by the chosen normalization of Hellinger distance,
\[
  \norm{y-z}_2^2
  =2\dH\left(P,\prod_i p_i\right)^2
  \le2\gamma^2 .
\]
Thus $V(\rho)-\omega_+^{(m)}(M)\le2\sqrt2\gamma$.
\end{proof}

\begin{lemma}[Local Hellinger independence tensorizes]\label{lem:local-to-global-hellinger}
Let $P$ be a distribution on $X_1,\ldots,X_m$.  Let $p_i$ be the marginal law of $X_i$, and let $p_{\bar i}$ be the joint marginal of all variables except $X_i$.  Suppose that, for every $1\le i<m$,
\[
  \dH(P,p_i\otimes p_{\bar i})\le\delta .
\]
Then
\[
  \dH\left(P,\prod_{i=1}^m p_i\right)\le (m-1)\delta .
\]
\end{lemma}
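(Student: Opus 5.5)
We will prove this by peeling off one coordinate at a time. Hellinger distance $\dH$ is a metric; it is $1/\sqrt2$ times the $\ell_2$ distance between square-root vectors. So the triangle inequality gives a chain of hybrid distributions from $P$ to $\prod_i p_i$. For $0\le j\le m-1$, define
\[
  Q_j:=p_1\otimes\cdots\otimes p_j\otimes P_{X_{j+1}\cdots X_m},
\]
where $P_{X_{j+1}\cdots X_m}$ is the joint marginal of the last $m-j$ coordinates. Then $Q_0=P$, and $Q_{m-1}=p_1\otimes\cdots\otimes p_{m-1}\otimes p_m=\prod_i p_i$. It suffices to show $\dH(Q_{j-1},Q_j)\le\delta$ for each $1\le j\le m-1$. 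Summing the $m-1$ steps by the triangle inequality then gives $(m-1)\delta$.

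For the single step, note that
\[
  Q_{j-1}=p_1\otimes\cdots\otimes p_{j-1}\otimes P_{X_j\cdots X_m},
  \qquad
  Q_j=p_1\otimes\cdots\otimes p_{j-1}\otimes p_j\otimes P_{X_{j+1}\cdots X_m}.
\]
Both share the same product prefix. The Hellinger affinity $\sum\sqrt{pq}$ is multiplicative under tensoring with a common factor, and $\sum_x p(x)=1$, so
\[
  \dH(Q_{j-1},Q_j)=\dH\bigl(P_{X_j\cdots X_m},\,p_j\otimes P_{X_{j+1}\cdots X_m}\bigr).
\]
This is the distance between the joint law of $(X_j,X_{>j})$ and the product of its two marginals.

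To compare this with the hypothesis, we use the data-processing inequality for Hellinger distance under marginalization. The map that discards $X_1,\ldots,X_{j-1}$ sends $P$ to $P_{X_j\cdots X_m}$. It sends $p_j\otimes p_{\bar j}$ to $p_j\otimes P_{X_{j+1}\cdots X_m}$, because the marginal of $p_{\bar j}$ on $X_{>j}$ is $P_{X_{>j}}$. Hellinger affinity can only increase under a stochastic map: by Cauchy--Schwarz, $\sqrt{\sum_a p(a,b)}\sqrt{\sum_a q(a,b)}\ge\sum_a\sqrt{p(a,b)q(a,b)}$. Hence the step distance is at most $\dH(P,p_j\otimes p_{\bar j})\le\delta$, using the hypothesis for $i=j<m$.

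The only point requiring care is the choice of hybrid ordering. Each step must depend only on the hypothesis for some $i<m$, which is why the last coordinate $X_m$ is never split off on its own. The remaining steps (the metric property, multiplicativity of affinity, and monotonicity under marginals) are routine.
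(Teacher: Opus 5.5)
Your proposal is correct and follows the paper's proof: the same hybrid chain $Q_j$ (the paper indexes it from $1$ to $m$), with each step bounded by marginalizing the hypothesis to the suffix variables and tensoring with the common product prefix. The $m-1$ steps are then summed by the triangle inequality, exactly as you do.
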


\begin{proof}
Hellinger distance is a metric, is nonincreasing under marginalization, and is unchanged after tensoring both arguments with the same independent distribution.  The first and third facts follow from the square-root embedding; the second follows because marginalization can only increase the Hellinger affinity, by Cauchy--Schwarz.  For $1\le i\le m$, write $P_{i\cdots m}$ for the marginal of $P$ on $(X_i,\ldots,X_m)$, and set
\[
  Q_i:=\left(\prod_{j<i}p_j\right)\otimes P_{i\cdots m}.
\]
Thus $Q_1=P$ and $Q_m=\prod_{j=1}^m p_j$.  For each $i<m$, marginalizing the hypothesis to the suffix variables $(X_i,\ldots,X_m)$ gives
\[
  \dH(P_{i\cdots m},p_i\otimes P_{i+1\cdots m})\le\delta .
\]
Tensoring this inequality with $\prod_{j<i}p_j$ gives $\dH(Q_i,Q_{i+1})\le\delta$.  The triangle inequality yields
\[
  \dH\left(P,\prod_{i=1}^m p_i\right)
  =\dH(Q_1,Q_m)
  \le\sum_{i=1}^{m-1}\dH(Q_i,Q_{i+1})
  \le (m-1)\delta .
\]
\end{proof}

\begin{lemma}[Correlation gives entropy drop]\label{lem:entropy-drop}
Let $\rho$ be separately bosonic with copy numbers $r_1,\ldots,r_{m-1}$, and suppose $r_i\ge2$ for some $i<m$.  Let $p_i$ and $p_{\bar i}$ be the marginals of $P_\rho$ as above.  If
\[
  \dH(P_\rho,p_i\otimes p_{\bar i})>\delta,
\]
then conditioning on a computational-basis measurement of $A_{i,1}$ produces residual states with the following properties.  For each outcome $a$ with probability $w_a>0$, let
\[
  \rho^a:=\frac{1}{w_a}
  (\bra a_{A_{i,1}}\rho\ket a_{A_{i,1}}),
\]
viewed as a state after relabeling the remaining registers in the $i$th block as $A_{i,1},\ldots,A_{i,r_i-1}$.  Then each $\rho^a$ is separately bosonic and
\begin{align}
  \sum_a w_aH_{\mathrm{test}}(\rho^a)&\le H_{\mathrm{test}}(\rho)-2\delta^2,
  \label{eq:multi-entropy-drop-main}\\
  \sum_a w_aV(\rho^a)&=V(\rho).
  \label{eq:multi-value-preserved-main}
\end{align}
\end{lemma}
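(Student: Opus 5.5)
The proof splits into three parts: show the conditioned states are well formed, check the value identity, and check the entropy drop. Everything rests on one observation. Because $\rho$ is supported on $\sym^{r_i}(A_i)\otimes(\text{rest})$, its marginal on any two copies $A_{i,1},A_{i,2}$ together with the other tested registers is symmetric under swapping those two copies.

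\textbf{Well-formedness.} The projector $\ket a\!\bra a_{A_{i,1}}$ acts only on the first copy. The space $\sym^{r_i}(A_i)$ is contained in $A_{i,1}\otimes\sym^{r_i-1}(A_{i,2}\cdots A_{i,r_i})$. So $\bra a\rho\ket a$ is supported on $\sym^{r_i-1}$ of the remaining copies, tensored with the untouched blocks. Each $\rho^a$ is therefore separately bosonic with copy numbers $r_i-1\ge1$.

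\textbf{Value identity.} Let $\sigma$ denote the marginal of $\rho$ on $A_{i,1}$, $A_{i,2}$, and the other tested registers $T'$. After relabeling, the tested marginal of $\rho^a$ is the state of $A_{i,2}T'$ conditioned on outcome $a$ on $A_{i,1}$. Averaging over $a$ with weights $w_a$ therefore recovers the marginal of $\sigma$ on $A_{i,2}T'$; this is where the computational-basis measurement is averaged out. By swap invariance, this marginal equals the marginal on $A_{i,1}T'$, which is exactly $\rho_{\mathrm{test}}$. Linearity of $V$ then gives \eqref{eq:multi-value-preserved-main}.

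\textbf{Entropy drop.} Measure $A_{i,1}$, $A_{i,2}$ and $T'$ in the computational basis. This gives random variables $(Y,X_i,X_{\bar i})$, where $Y$ is the outcome on $A_{i,1}$ and $X_i$ is the outcome on $A_{i,2}$. By swap symmetry:
\begin{itemize}
\item $(Y,X_{\bar i})$ has law $p_i\otimes$-correlated exactly as $P_\rho$ on $(X_i,X_{\bar i})$, that is, $(Y,X_{\bar i})\sim P_\rho$;
\item $(X_i,X_{\bar i})\sim P_\rho$ as well.
\end{itemize}
Measuring $A_{i,1}$ first and then the rest in the computational basis gives the same joint distribution, since the measurements act on different registers. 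Hence $P_{\rho^a}$ is the conditional law of $(X_i,X_{\bar i})$ given $Y=a$, and
\[
\sum_a w_a H_{\mathrm{test}}(\rho^a)=H(X_i,X_{\bar i}\mid Y)=H_{\mathrm{test}}(\rho)-I(Y;X_iX_{\bar i}).
\]
By monotonicity of mutual information, $I(Y;X_iX_{\bar i})\ge I(Y;X_{\bar i})$. Since $(Y,X_{\bar i})\sim P_\rho$, we have
\[
I(Y;X_{\bar i})=\KL(P_\rho\Vert p_i\otimes p_{\bar i})\ge 2\dH(P_\rho,p_i\otimes p_{\bar i})^2>2\delta^2,
\]
using \Cref{lem:hellinger-kl}. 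This yields \eqref{eq:multi-entropy-drop-main}.

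\textbf{Main obstacle.} The delicate point is justifying the identifications between the quantum conditioning and the classical joint distribution. One needs that the diagonal of $\rho^a$ on the relabeled tested registers equals the classical conditional law. This holds because computational-basis projectors on disjoint registers commute. One also needs that swap invariance of $\rho$ descends to the measured classical distribution, which holds because the swap permutes computational basis states. With these two facts checked carefully, the rest is the standard chain rule for entropy together with the Hellinger--KL bound.
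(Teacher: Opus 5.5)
Your proof is correct and matches the paper's argument step for step. It uses the same support observation, the same swap-invariance argument for value preservation, and the same chain $H(X_i,X_{\bar i}\mid Y)=H(P_\rho)-I(Y;X_iX_{\bar i})\le H(P_\rho)-I(Y;X_{\bar i})$, closed off by \Cref{lem:hellinger-kl}. The phrase ``$p_i\otimes$-correlated'' in your first bullet is garbled, but the intended claim $(Y,X_{\bar i})\sim P_\rho$ holds directly by definition, since $A_{i,1}$ is the tested copy.
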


\begin{proof}
Let $X_i$ be the computational-basis outcome on $A_{i,1}$, let $X_{\bar i}$ denote the outcomes on all other tested registers, and let $X_i'$ be the outcome on the fresh copy $A_{i,2}$.  By \Cref{lem:hellinger-kl},
\[
  I(X_i;X_{\bar i})
  =\KL(P_\rho\Vert p_i\otimes p_{\bar i})
  >2\delta^2 .
\]
For the residual state $\rho^a$, the tested distribution is the conditional law of $(X_i',X_{\bar i})$ given $X_i=a$.  Therefore
\[
  \sum_a w_aH_{\mathrm{test}}(\rho^a)
  =H(X_i',X_{\bar i}\mid X_i).
\]
Because $\rho$ is invariant under swapping $A_{i,1}$ and $A_{i,2}$, the marginal law of $(X_i',X_{\bar i})$ is the same as that of $(X_i,X_{\bar i})$, namely $P_\rho$.  Hence
\[
\begin{aligned}
  H(X_i',X_{\bar i}\mid X_i)
  &=H(X_i',X_{\bar i})-I(X_i;X_i',X_{\bar i}) \\
  &=H(P_\rho)-I(X_i;X_i',X_{\bar i}) \\
  &\le H(P_\rho)-I(X_i;X_{\bar i}) \\
  &< H_{\mathrm{test}}(\rho)-2\delta^2,
\end{aligned}
\]
which proves \eqref{eq:multi-entropy-drop-main}.

For value preservation, average the post-measurement states and use the fresh register $A_{i,2}$ as the tested register.  The averaged tested marginal is the marginal of $\rho$ on
\[
  A_{1,1}\cdots A_{i-1,1}A_{i,2}A_{i+1,1}\cdots A_{m-1,1}A_m .
\]
By symmetry of the $i$th block this marginal equals $\rho_{\mathrm{test}}$, and therefore \eqref{eq:multi-value-preserved-main} follows.

It remains only to check support.  If $v\in\sym^{r_i}(A_i)$, then $\bra a_{A_{i,1}}v$ is invariant under all permutations of the remaining $r_i-1$ tensor factors, so it lies in $\sym^{r_i-1}(A_i)$.  Applying this observation to a spectral decomposition of $\rho$ shows that every normalized residual state $\rho^a$ is separately bosonic.
\end{proof}

\begin{lemma}[Potential increase]\label{lem:potential-increase}
Let $\delta,\mu>0$ and define
\[
  \Phi(\rho):=V(\rho)-\mu H_{\mathrm{test}}(\rho).
\]
If $\rho$ is separately bosonic, $r_i\ge2$, and
\[
  \dH(P_\rho,p_i\otimes p_{\bar i})>\delta
\]
for some $i<m$, then some outcome $a$ of the computational-basis measurement of $A_{i,1}$ satisfies
\[
  \Phi(\rho^a)\ge \Phi(\rho)+2\mu\delta^2 .
\]
\end{lemma}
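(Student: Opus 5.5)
This is an averaging argument on top of \Cref{lem:entropy-drop}.

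First, apply \Cref{lem:entropy-drop} to $\rho$ and the index $i$. The hypotheses match exactly: $\rho$ is separately bosonic, $r_i\ge2$, and $\dH(P_\rho,p_i\otimes p_{\bar i})>\delta$. So measuring $A_{i,1}$ in the computational basis gives outcomes $a$ with probabilities $w_a>0$ and separately bosonic residual states $\rho^a$. These satisfy the value identity \eqref{eq:multi-value-preserved-main} and the entropy bound \eqref{eq:multi-entropy-drop-main}.

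Second, combine the two relations linearly, using that $\Phi$ is affine in the pair $(V,H_{\mathrm{test}})$ and that $\mu>0$:
\[
  \sum_a w_a\Phi(\rho^a)
  =\sum_a w_aV(\rho^a)-\mu\sum_a w_aH_{\mathrm{test}}(\rho^a)
  \ge V(\rho)-\mu\bigl(H_{\mathrm{test}}(\rho)-2\delta^2\bigr)
  =\Phi(\rho)+2\mu\delta^2 .
\]
The weights $w_a$ are nonnegative and sum to one, so the left side is a convex combination of the numbers $\Phi(\rho^a)$. Hence at least one outcome $a$ has $\Phi(\rho^a)$ at least this average, which is the claim.

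There is no real obstacle; the analytic work is already in \Cref{lem:entropy-drop}. The only points to check are bookkeeping. The sum runs over finitely many outcomes, all with $w_a>0$. Each $\rho^a$ is again separately bosonic, now with $r_i$ lowered by one, so $\Phi(\rho^a)$ is well defined with the same tested registers after relabeling. Finally, the lemma's entropy bound is stated with $\le$ (the proof even gives a strict inequality), and $\le$ is all the averaging step needs.
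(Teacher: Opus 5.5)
Your proposal is correct and follows the paper's proof essentially verbatim. You apply \Cref{lem:entropy-drop}, combine the value identity and the entropy bound linearly using $\mu>0$ to get $\sum_a w_a\Phi(\rho^a)\ge\Phi(\rho)+2\mu\delta^2$, and then pick an outcome whose potential is at least this $w$-weighted average.
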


\begin{proof}
By \Cref{lem:entropy-drop},
\[
\begin{aligned}
  \sum_aw_a\Phi(\rho^a)
  &=\sum_aw_aV(\rho^a)-\mu\sum_aw_aH_{\mathrm{test}}(\rho^a) \\
  &\ge V(\rho)-\mu\bigl(H_{\mathrm{test}}(\rho)-2\delta^2\bigr) \\
  &=\Phi(\rho)+2\mu\delta^2 .
\end{aligned}
\]
At least one outcome has potential at least the average.
\end{proof}

\begin{proof}[Proof of \Cref{thm:positive-definetti}]
Set
\[
  \delta:=\frac{\eps}{4\sqrt2\,(m-1)},
  \qquad
  \mu:=\frac{\eps}{4L},
  \qquad
  T:=R-1=\left\lceil\frac{128L(m-1)^2}{\eps^3}\right\rceil .
\]
The choice of constants gives the entropy-potential budget
\begin{equation}\label{eq:multi-potential-budget}
  2T\mu\delta^2
  \ge
  2\cdot\frac{128L(m-1)^2}{\eps^3}\cdot
  \frac{\eps}{4L}\cdot
  \frac{\eps^2}{32(m-1)^2}
  =2 .
\end{equation}
Starting from $\rho^{(0)}=\rho$, run the following adaptive conditioning process.  At stage $j$, the state $\rho^{(j)}$ is separately bosonic with some remaining copy numbers $r_1^{(j)},\ldots,r_{m-1}^{(j)}$.  If
\[
  \dH(P_{\rho^{(j)}},p_i\otimes p_{\bar i})\le\delta
  \qquad\text{for every }i<m,
\]
stop.  Otherwise choose any index $i<m$ for which the inequality fails.  Provided $j<T$, every block still has at least two copies: indeed, every block began with $T+1$ copies and fewer than $T$ total conditionings have occurred.  Thus \Cref{lem:potential-increase} applies, and we choose an outcome such that
\[
  \Phi(\rho^{(j+1)})
  \ge
  \Phi(\rho^{(j)})+2\mu\delta^2 .
\]

The process cannot perform $T$ conditioning steps.  If it did, then by \eqref{eq:multi-potential-budget},
\[
  \Phi(\rho^{(T)})
  \ge
  \Phi(\rho^{(0)})+2 .
\]
Since $H_{\mathrm{test}}(\rho^{(0)})\le\sum_i\log d_i\le L$,
\[
  \Phi(\rho^{(0)})
  \ge V(\rho)-\mu L
  =V(\rho)-\frac{\eps}{4} .
\]
Also $V(\rho)\ge0$ because $M\succeq0$.  As $\eps\le1$, this would imply $\Phi(\rho^{(T)})>1$.  This is impossible: for every state $\sigma$,
\[
  \Phi(\sigma)\le V(\sigma)=\Tr[M\sigma_{\mathrm{test}}]\le1,
\]
where the last inequality uses $M\preceq I$.

Let $\sigma$ be the state at which the process stops.  The potential never decreases, so
\[
  V(\sigma)
  \ge\Phi(\sigma)
  \ge\Phi(\rho)
  \ge V(\rho)-\frac{\eps}{4} .
\]
By \Cref{lem:local-to-global-hellinger}, with $p_i$ now denoting the marginals of $P_\sigma$,
\[
  \dH\left(P_\sigma,\prod_{i=1}^m p_i\right)
  \le
  (m-1)\delta
  =
  \frac{\eps}{4\sqrt2} .
\]
Applying \Cref{lem:direct-rounding} to $\sigma$ gives
\[
  \omega_+^{(m)}(M)
  \ge
  V(\sigma)-2\sqrt2\cdot\frac{\eps}{4\sqrt2}
  \ge
  V(\rho)-\frac{3\eps}{4} .
\]
Since the maximum in $\omega_+^{(m)}(M)$ is attained on the compact product of unit spheres, there exist nonnegative unit vectors attaining a value at least $V(\rho)-3\eps/4$, and hence at least $V(\rho)-\eps$.
\end{proof}

\begin{theorem}[Separately symmetric spectral extension]\label{thm:multi-extension}
Let $M$ be an entrywise nonnegative real-symmetric matrix on $A_1\otimes\cdots\otimes A_m$ satisfying $0\preceq M\preceq I$.  Let $0<\eps\le1$ and set
\[
  L:=\max\left\{1,\sum_{i=1}^m\log d_i\right\},
  \qquad
  R:=1+\left\lceil\frac{128L(m-1)^2}{\eps^3}\right\rceil .
\]
Then
\[
  \omega_+^{(m)}(M)
  \le
  \Lambda_R^{(m)}(M)
  \le
  \omega_+^{(m)}(M)+\eps .
\]
\end{theorem}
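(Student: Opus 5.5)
The lower bound $\omega_+^{(m)}(M)\le\Lambda_R^{(m)}(M)$ is exactly \Cref{lem:product-lift}, so only the upper bound needs an argument. For that I plan to pass from the largest eigenvalue of $\mathcal E_R^{(m)}(M)$ to a separately bosonic state and then invoke \Cref{thm:positive-definetti}.

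Let $v$ be a unit top eigenvector of $\mathcal E_R^{(m)}(M)$, with eigenvalue $\Lambda:=\Lambda_R^{(m)}(M)$. If $\Lambda=0$, the bound is trivial because $\omega_+^{(m)}(M)\ge0$, which holds since $M\succeq0$. Otherwise $\Lambda>0$. Writing $\mathcal E_R^{(m)}(M)=\Pi_R^{<m}(\cdots)\Pi_R^{<m}$ shows that $v=\Lambda^{-1}\mathcal E_R^{(m)}(M)v$ lies in the range of $\Pi_R^{<m}$, and hence $\Pi_R^{<m}v=v$. Therefore $\rho:=\ket v\!\bra v$ is supported on $\sym^R(A_1)\otimes\cdots\otimes\sym^R(A_{m-1})\otimes A_m$, i.e. it is separately bosonic.

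Next I will identify the tested value with $\Lambda$. Using $\Pi_R^{<m}v=v$,
\[
  \Lambda=\langle v,\mathcal E_R^{(m)}(M)v\rangle=\langle v,(M_{A_{1,1}\cdots A_{m-1,1}A_m}\otimes I_{\mathrm{rest}})v\rangle=\Tr[M\rho_{\mathrm{test}}]=V(\rho).
\]
Since $M$ is entrywise nonnegative with $0\preceq M\preceq I$, and the parameters $L$ and $R$ coincide with those in \Cref{thm:positive-definetti}, that theorem applies. It produces nonnegative unit vectors $x_i$ with $\langle\bigotimes x_i,M\bigotimes x_i\rangle\ge V(\rho)-\eps$. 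Consequently $\omega_+^{(m)}(M)\ge\Lambda-\eps$, which is the upper bound.

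There is no real obstacle here; the only point needing care is that the top eigenvector lies in the symmetric subspace. The argument above handles the case $\Lambda>0$. Alternatively, one can note that the compression is zero on the complement of that subspace, so the top eigenvalue can always be realized by a vector inside it, provided it is nonnegative, which it is because $\mathcal E_R^{(m)}(M)\succeq0$.
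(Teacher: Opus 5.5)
Your proposal is correct and follows the paper's proof essentially step for step: the lower bound is \Cref{lem:product-lift}; for the upper bound you dispose of $\Lambda=0$, use $\Lambda\neq0$ to place the top eigenvector in the range of $\Pi_R^{<m}$, identify $V(\rho)=\Lambda$, and apply \Cref{thm:positive-definetti} with the same $L$ and $R$. Your explicit justification of the $\Lambda=0$ case via $\omega_+^{(m)}(M)\ge0$ (from $M\succeq0$) just spells out what the paper calls immediate.
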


\begin{proof}
The lower bound is \Cref{lem:product-lift}.  For the upper bound, the case $\Lambda_R^{(m)}(M)=0$ is immediate.  Otherwise let $\psi$ be a unit eigenvector of $\mathcal E_R^{(m)}(M)$ with eigenvalue $\Lambda_R^{(m)}(M)$, and put $\rho=\ket\psi\bra\psi$.  Since
\[
  \mathcal E_R^{(m)}(M)=\Pi_R^{<m}\mathcal E_R^{(m)}(M)\Pi_R^{<m}
\]
and the eigenvalue is nonzero, $\psi=\Pi_R^{<m}\psi$.  Thus $\rho$ is separately bosonic.  Moreover,
\[
  V(\rho)
  =\langle \psi,(M_{A_{1,1}\cdots A_{m-1,1}A_m}\otimes I_{\mathrm{rest}})\psi\rangle
  =\langle\psi,\mathcal E_R^{(m)}(M)\psi\rangle
  =\Lambda_R^{(m)}(M).
\]
Applying \Cref{thm:positive-definetti} to $\rho$ gives
\[
  \omega_+^{(m)}(M)
  \ge \Lambda_R^{(m)}(M)-\eps,
\]
which is the desired upper bound.
\end{proof}

\begin{remark}[Why ordinary eigenvalues do not suffice]\label{rem:ordinary-spectral-fails}
Let
\[
  \ket{\Phi}=\frac1{\sqrt d}\sum_{i=1}^d\ket{i,i},
  \qquad
  M=\ket{\Phi}\bra{\Phi}.
\]
Then $M$ is entrywise nonnegative, positive semidefinite, and $\lambda_{\max}(M)=1$.  However,
\[
  \max_{\norm{x}=\norm{y}=1}
  \langle x\otimes y,M(x\otimes y)\rangle
  =
  \max_{\norm{x}=\norm{y}=1}\frac1d\abs{\sum_{i=1}^d x_i y_i}^2
  =\frac1d .
\]
Thus the ordinary largest eigenvalue can be much larger than the product value even for positive matrices.  The separately symmetric extension above restricts such entangled Perron eigenvectors by a monogamy constraint, while still remaining close to the nonnegative product value.
\end{remark}

\section{Implementing the extension by a one-witness stoquastic verifier}\label{sec:stoq-implementation}

The de Finetti theorem in \Cref{sec:extension} produces an ordinary spectral relaxation: the product value of $M$ is approximated by the largest eigenvalue of a separately symmetric extension.  To obtain a collapse to $\stoqma$, we must show more than the existence of this relaxation.  We must realize essentially the same operator as the Hermitian overlap matrix of a polynomial-size one-witness stoquastic verifier.

This section proves that implementation statement.  The witness of the new verifier is
\[
  A_1^{\otimes R}\otimes\cdots\otimes A_{m-1}^{\otimes R}\otimes A_m,
\]
and the verifier coherently routes the first $m-1$ blocks, applies the old test to one routed copy from each block, and then routes again.  Two bookkeeping issues are important.

First, a branch-overlap circuit directly realizes a Hermitian overlap matrix, whereas the original verifier is specified by its compressed acceptance matrix $M$.  \Cref{lem:acceptance-as-overlap} shows that this is harmless: by adding one dyadic branch bit and choosing between the identity and the original raw overlap circuit, the matrix $M=(I+H)/2$ itself becomes a Hermitian overlap matrix.

Second, the ideal projector $\Pi_R^A$ averages uniformly over all $R!$ permutations of $A^{\otimes R}$, but a polynomial-size stoquastic verifier branches over a power of two many computational paths.  Thus we replace the uniform average by a dyadic distribution on $S_R$.  The distribution is chosen to be invariant under inversion.  This guarantees that the averaged routing operator is self-adjoint, so that the Hermitian part of the final raw overlap is exactly the desired two-sided compression.

Throughout this section, $U_\tau^A$ denotes the reversible permutation of the $R$ tensor factors of $A^{\otimes R}$ induced by $\tau\in S_R$.  Thus $(U_\tau^A)^T=U_{\tau^{-1}}^A$.

\begin{lemma}[Dyadic approximate symmetrizer]\label{lem:dyadic-symmetrizer}
Let $R\ge1$ and $0<\eta<1$.  There is an integer
\[
  q=O(\log(R!)+\log(1/\eta))
  =O(R\log R+\log(1/\eta))
\]
and an efficiently computable map $\pi:\{0,1\}^q\to S_R$ such that, for $t$ uniform in $\{0,1\}^q$, the distribution $p$ of $\pi(t)$ satisfies
\[
  \sum_{\tau\in S_R}\abs{p(\tau)-1/R!}\le\eta
  \qquad\text{and}\qquad
  p(\tau)=p(\tau^{-1})\quad\text{for all }\tau\in S_R .
\]
Consequently, for every register $A$,
\[
  \Pi_{R,\eta}^A:=\E_{t\in\{0,1\}^q}U_{\pi(t)}^A
\]
is a self-adjoint entrywise nonnegative contraction and satisfies
\[
  \norm{\Pi_{R,\eta}^A-\Pi_R^A}\le\eta .
\]
Moreover, when $R$ is polynomially bounded, the controlled map
\[
  (t,z)\longmapsto (t,U_{\pi(t)}^A z)
\]
is implementable by a polynomial-size reversible circuit, uniformly in $R$ and in the register size of $A$.
\end{lemma}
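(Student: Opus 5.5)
The plan is to build the sampler from a uniform sampler, symmetrize it under inversion with one extra random bit, and then derive the operator statements from the resulting distributional bounds.

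\textbf{Step 1: an approximately uniform sampler.} Let $N:=R!$ and $q_0:=\lceil\log_2(N/\eta)\rceil+1$. Read $t_0\in\{0,1\}^{q_0}$ as an integer $u\in[0,2^{q_0})$ and set $j:=u\bmod N$. Then each residue occurs either $\lfloor 2^{q_0}/N\rfloor$ or $\lceil 2^{q_0}/N\rceil$ times. Hence $|\Pr[j]-1/N|\le 2^{-q_0}$ for each residue, and the $\ell_1$ error is at most $N2^{-q_0}\le\eta/2$.

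Map $j$ to a permutation $\sigma(j)\in S_R$ by the factorial number system (Lehmer code). This is an efficient bijection: compute the mixed-radix digits $j_\ell\in\{0,\dots,\ell\}$ and realize $\sigma(j)$ as the product of transpositions $(\ell,\ j_\ell)$, as in Fisher--Yates. Let $p_0$ denote the law of $\sigma(j)$; it satisfies $\|p_0-\mathrm{unif}\|_1\le\eta/2\le\eta$.

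\textbf{Step 2: inverse invariance.} Use one extra bit $b$ and set $\pi(t_0,b):=\sigma(j)$ if $b=0$ and $\sigma(j)^{-1}$ if $b=1$. The resulting law is $p(\tau)=\tfrac12\bigl(p_0(\tau)+p_0(\tau^{-1})\bigr)$, which is exactly inverse-invariant. Inversion is a bijection fixing the uniform distribution, so $\|p-\mathrm{unif}\|_1\le\|p_0-\mathrm{unif}\|_1\le\eta$ by the triangle inequality. The total seed length is $q=q_0+1=O(\log R!+\log(1/\eta))=O(R\log R+\log(1/\eta))$.

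\textbf{Step 3: operator consequences.} We have $\Pi^A_{R,\eta}=\sum_\tau p(\tau)U^A_\tau$. It is entrywise nonnegative as a convex combination of permutation matrices, and a contraction since each $U_\tau$ is unitary. It is self-adjoint because
\[
(\Pi^A_{R,\eta})^T=\sum_\tau p(\tau)U^A_{\tau^{-1}}=\sum_\tau p(\tau^{-1})U^A_\tau=\Pi^A_{R,\eta}.
\]
For the distance to the true projector,
\[
\norm{\Pi^A_{R,\eta}-\Pi^A_R}=\Bigl\|\sum_\tau\bigl(p(\tau)-1/R!\bigr)U_\tau\Bigr\|\le\sum_\tau\abs{p(\tau)-1/R!}\le\eta .
\]

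\textbf{Step 4: reversible implementation (the main obstacle).} The controlled map $(t,z)\mapsto(t,U_{\pi(t)}z)$ must be reversible and clean, with no leftover garbage correlated with $z$, since leftover garbage would change the compressed overlap. The plan is as follows.
\begin{enumerate}
\item Reversibly compute from $t$ the digits $j_\ell$ (and $b$) into ancillas, using a Bennett-style compute/uncompute of the modular and mixed-radix arithmetic. These are polynomial-size since $\log R!=O(R\log R)$ bits.
\item Apply the controlled swaps of the tensor factors of $z$: the transposition $(\ell,j_\ell)$ for $\ell=R-1,\dots,1$. Realize the inverse by applying the same transpositions in reverse order, each controlled additionally on $b$.
\item Uncompute the digits. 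This depends only on $t$, which is untouched, so the ancillas return to $\ket{0}$ exactly.
\end{enumerate}
Each controlled transposition of two $\log\dim A$-qubit registers, indexed by a digit, is a multiplexed controlled-SWAP network of size $O(R\log\dim A)$. The total circuit is therefore $\poly(R,\log\dim A,\log(1/\eta))$ and is uniformly generated.
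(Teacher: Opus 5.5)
Your proof is correct. Most of it matches the paper's: the factorial-number-system enumeration of $S_R$, the $\ell_1$ bound, the estimate of the operator error by the $\ell_1$ error, and compute--route--uncompute for the circuit. The genuine difference is how you obtain inverse invariance.

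The paper uses no extra randomness. With $Q=2^q$ and $L=\lfloor Q/N\rfloor$, it sends the first $LN$ seeds to $\tau_{t\bmod N}$ and assigns all $b=Q-LN<N$ leftover seeds to the identity. Every nonidentity permutation then has probability exactly $L/Q$. Inversion fixes the identity and permutes the nonidentity elements, so $p(\tau)=p(\tau^{-1})$ holds automatically, with $\ell_1$ error $2b(1-1/N)/Q\le 2N/Q$.

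You instead reduce modulo $N$. This puts the excess mass on the $Q\bmod N$ smallest residues, whose images bear no relation to inversion. You then repair this by averaging $p_0$ with its pushforward under $\tau\mapsto\tau^{-1}$, using one extra seed bit. The result is exactly invariant, and because the uniform law is inversion-invariant, the $\ell_1$ error does not increase.

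The paper's trick is slightly more economical: it is a single deterministic map, and the circuit never has to realize $\sigma^{-1}$. But it depends on having a self-inverse element to absorb the rounding excess. Your symmetrization is generic: it turns any approximate sampler into one invariant under any involution that fixes the target distribution. The cost is one bit plus a $b$-controlled choice between the forward and reversed transposition sequence. That choice is cheap here because a Fisher--Yates product of transpositions is inverted by reversing its order.
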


\begin{proof}
Let $N=R!$ and choose $q$ so that $2N/2^q\le\eta$.  Let $Q:=2^q$, let $L:=\lfloor Q/N\rfloor$, and write
\[
  b:=Q-LN,
  \qquad 0\le b<N .
\]
Fix an efficiently computable enumeration $\tau_0,\ldots,\tau_{N-1}$ of $S_R$, for example lexicographic order obtained from the factorial number system, with $\tau_0$ equal to the identity.

Interpret $t\in\{0,1\}^q$ as an integer in $\{0,\ldots,Q-1\}$.  If $t<LN$, write $t=cN+j$ with $0\le j<N$ and set $\pi(t)=\tau_j$.  If $t\ge LN$, set $\pi(t)=\tau_0$.  Therefore every nonidentity permutation receives exactly $L$ preimages, while the identity receives $L+b$ preimages.  Inversion fixes the identity and maps nonidentity permutations to nonidentity permutations, so $p(\tau)=p(\tau^{-1})$ for all $\tau$.

The total variation estimate is explicit.  For a nonidentity permutation, the deviation from uniform is $b/(NQ)$, and for the identity it is $b(1-1/N)/Q$.  Hence
\[
  \sum_{\tau\in S_R}\abs{p(\tau)-1/N}
  =(N-1)\frac{b}{NQ}+\frac{b}{Q}\left(1-\frac1N\right)
  \le \frac{2b}{Q}
  \le \frac{2N}{Q}
  \le\eta .
\]
Since $U_\tau^T=U_{\tau^{-1}}$ and $p$ is inverse-invariant,
\[
  (\Pi_{R,\eta}^A)^T
  =\sum_{\tau\in S_R}p(\tau)U_{\tau^{-1}}^A
  =\sum_{\tau\in S_R}p(\tau)U_\tau^A
  =\Pi_{R,\eta}^A .
\]
It is a convex combination of permutation matrices, so it is entrywise nonnegative and has operator norm at most one.  Finally,
\[
  \norm{\Pi_{R,\eta}^A-\Pi_R^A}
  =
  \left\|\sum_{\tau\in S_R}\left(p(\tau)-\frac1N\right)U_\tau^A\right\|
  \le
  \sum_{\tau\in S_R}\abs{p(\tau)-1/N}
  \le\eta .
\]
For the circuit claim, compute $\pi(t)$ reversibly into workspace using comparison with $LN$, division with remainder modulo $N$, and factorial-number-system unranking.  Then apply the corresponding controlled permutation of the $R$ tensor factors using a reversible routing network and uncompute the workspace.  All arithmetic and routing uses polynomially many gates when $R$ and the number of qubits in $A^{\otimes R}$ are polynomially bounded.
\end{proof}

\begin{lemma}[Acceptance matrices can be used as overlap matrices]\label{lem:acceptance-as-overlap}
Let $V$ be a stoquastic verifier with raw compressed overlap $G$, Hermitian overlap $H=(G+G^T)/2$, and compressed acceptance matrix
\[
  M=\frac{I+H}{2}
\]
on witness register $W$.  There is a polynomial-size stoquastic branch-overlap subroutine on the same witness register whose Hermitian overlap matrix is exactly $M$.
\end{lemma}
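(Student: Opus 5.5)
The plan is to build the new branch-overlap circuit from one extra branch bit $b$ in state $\ket{+}$ that selects between the identity and the original circuit $C$. Let $V$ be given in branch-overlap normal form with ancilla state $\ket{\eta}=\ket{0^z}\ket{+^r}$ and reversible circuit $C$, so that $G=(I_W\otimes\bra\eta)C(I_W\otimes\ket\eta)$.

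Define the new ancilla state as $\ket{\eta'}=\ket{+}_b\otimes\ket{\eta}$. Define the new reversible circuit as
\[
  C' := \ket{0}\!\bra{0}_b\otimes I + \ket{1}\!\bra{1}_b\otimes C .
\]
This is a permutation matrix, implementable as a controlled version of $C$. Controlling each reversible gate on $b$ keeps it a classical reversible gate: NOT becomes CNOT, CNOT becomes Toffoli, and a Toffoli becomes a doubly controlled Toffoli, which decomposes into Toffolis with one $\ket{0}$ workspace ancilla that is restored. So $C'$ has polynomial size and stays in the normal form of \Cref{def:branch-overlap-verifier}.

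Next I would compute the raw overlap. Since $C'$ is block diagonal in $b$ and $\langle +|0\rangle\langle 0|+\rangle=\langle +|1\rangle\langle 1|+\rangle=1/2$, compressing gives
\[
  G' = \tfrac12 (I_W\otimes\bra\eta)(I_W\otimes\ket\eta) + \tfrac12 G = \tfrac12(I+G).
\]
Here I use that $\eta$ is a unit vector. For the workspace ancilla introduced in the decomposition, I need to check it returns to $\ket0$, so that it contributes a factor $1$. The Hermitian part is then
\[
  H' = \frac{G'+G'^T}{2} = \frac{I}{2}+\frac{G+G^T}{4} = \frac{I+H}{2} = M .
\]

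The only step needing care is the claim that the controlled circuit is still a legitimate reversible circuit in the standard gate set, with a workspace cleanly uncomputed, so that the compression is exactly $(I+G)/2$ and not perturbed by garbage. I expect this gate-level bookkeeping to be the main obstacle, though it is standard. Everything else is an exact identity, and \Cref{lem:branch-basic} then certifies that the subroutine is a valid stoquastic verifier.
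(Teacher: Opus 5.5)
Your proposal is correct and follows the paper's proof: one extra $\ket{+}$ branch bit selects between the identity and $C$, giving raw overlap $(I+G)/2$, whose Hermitian part is exactly $M$. The gate-level detail you add (controlling each gate on $b$, with a $\ket{0}$ workspace that is restored on every basis input) is correct; the paper leaves it implicit in the remark that a controlled choice between reversible circuits is again reversible.
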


\begin{proof}
Let $C$ be the reversible circuit, with its original initialized ancillas, whose raw compressed overlap is $G$.  Add one new branch bit $b$ initialized in $\ket{+}$.  The new reversible circuit leaves $b$ unchanged and applies the identity to the witness and old ancillas when $b=0$, and applies $C$ when $b=1$.  Compressing over $b$ and the old initialized ancillas gives raw overlap
\[
  G'=\frac{I+G}{2} .
\]
Therefore the Hermitian overlap of the new subroutine is
\[
  \frac{G'+(G')^T}{2}
  =\frac12 I+\frac12\frac{G+G^T}{2}
  =\frac{I+H}{2}
  =M .
\]
The controlled choice between two classical reversible circuits is again a classical reversible circuit, and the only added initialized state is $\ket{+}$.
\end{proof}

\begin{proposition}[Stoquastic realization of the separately symmetric extension]\label{prop:extension-realization}
Let $V$ be an $m$-witness stoquastic verifier with compressed acceptance matrix $M$ on $A_1\otimes\cdots\otimes A_m$, where $m\ge2$.  Fix $R\le\poly(n)$ and $0<\eta<1/2$.  There is a polynomial-size one-witness stoquastic verifier on witness register
\[
  W_R:=A_1^{\otimes R}\otimes\cdots\otimes A_{m-1}^{\otimes R}\otimes A_m
\]
whose Hermitian overlap matrix is
\[
  \widetilde E_{R,\eta}^{(m)}(M)
  :=
  \widetilde\Pi_R^{<m}
  (M_{A_{1,1}\cdots A_{m-1,1}A_m}\otimes I_{\mathrm{rest}})
  \widetilde\Pi_R^{<m},
\]
where
\[
  \widetilde\Pi_R^{<m}
  :=
  \Pi_{R,\eta}^{A_1}\otimes\cdots\otimes\Pi_{R,\eta}^{A_{m-1}}\otimes I_{A_m} .
\]
Consequently its compressed acceptance matrix is
\[
  \widetilde C_{R,\eta}^{(m)}(M)
  :=\frac{I+\widetilde E_{R,\eta}^{(m)}(M)}{2} .
\]
Moreover, if $\mathcal E_R^{(m)}(M)$ denotes the ideal extension from \Cref{sec:extension}, then
\[
  \norm{\widetilde E_{R,\eta}^{(m)}(M)-\mathcal E_R^{(m)}(M)}
  \le 2(m-1)\eta .
\]
\end{proposition}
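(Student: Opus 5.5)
We will build the verifier as a three-stage reversible circuit and read off its raw compressed overlap $G$. We then check that the Hermitian part $(G+G^T)/2$ equals $\widetilde E_{R,\eta}^{(m)}(M)$ exactly.

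\textbf{Construction.} By \Cref{lem:acceptance-as-overlap}, fix a subroutine $D$ with its own initialized ancillas $\ket{\eta_D}$ whose raw overlap $G_D$ on $A_1\otimes\cdots\otimes A_m$ satisfies $(G_D+G_D^T)/2=M$. For each block $i<m$, apply \Cref{lem:dyadic-symmetrizer} to obtain $q$-bit branch registers. We use two independent fresh families, $t_i$ and $s_i$, all initialized in $\ket{+^q}$. The circuit performs three steps in order:
\begin{enumerate}[label=(\roman*)]
\item apply the controlled routings $U_{\pi(t_i)}^{A_i}$ for each $i<m$;
\item run $D$ on $A_{1,1},\ldots,A_{m-1,1},A_m$;
\item apply the controlled routings $U_{\pi(s_i)}^{A_i}$.
\end{enumerate}
This is a classical reversible circuit of polynomial size, since $R$ is polynomial and \Cref{lem:dyadic-symmetrizer} applies. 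Its ancillas are only $\ket{0}$ and $\ket{+}$ states, so \Cref{lem:branch-basic} applies.

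\textbf{Computing the overlap.} The $t$, $s$, and $D$-ancilla registers are disjoint and each is projected onto its own initial state. The compression therefore factors as a product of the separate compressions:
\[
G=\widetilde\Pi_R^{<m}\,(G_D\otimes I_{\mathrm{rest}})\,\widetilde\Pi_R^{<m}.
\]
The outer factors come from averaging each controlled permutation over uniform $t_i$ or $s_i$, and the middle factor is the compression of $D$. Concretely, $(I\otimes\bra{+^q})\,\mathrm{c}U\,(I\otimes\ket{+^q})=\E_t U_{\pi(t)}=\Pi^{A_i}_{R,\eta}$. One step needs care here. The $t$ branch register is projected back onto $\ket{+^q}$ only at the end, after $D$ and the $s$-routing have acted. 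This is harmless because none of the later gates touch $t$, so the projection commutes with them.

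Transposing, and using the self-adjointness $(\Pi^{A}_{R,\eta})^T=\Pi^{A}_{R,\eta}$ guaranteed by inverse-invariance, gives
\[
G^T=\widetilde\Pi_R^{<m}(G_D^T\otimes I)\widetilde\Pi_R^{<m}.
\]
Averaging $G$ and $G^T$ then yields exactly $\widetilde E_{R,\eta}^{(m)}(M)$. The acceptance matrix $(I+\widetilde E)/2$ follows from \Cref{def:branch-overlap-verifier}. We expect the main subtlety to be this bookkeeping. It is the reason two independent branch families are used rather than one: reusing a single $t$ on both sides would correlate the two routings and would not produce the product $\widetilde\Pi\,(\cdot)\,\widetilde\Pi$. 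It is also where inverse-invariance is essential.

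\textbf{Norm bound.} Write $P=\widetilde\Pi_R^{<m}$, $\Pi=\Pi_R^{<m}$ and $N=M_{\cdots}\otimes I$, so that $\norm N\le1$ and $\norm P,\norm\Pi\le1$. Then
\[
PNP-\Pi N\Pi=(P-\Pi)NP+\Pi N(P-\Pi),
\]
so $\norm{PNP-\Pi N\Pi}\le2\norm{P-\Pi}$. For the tensor product we telescope over the $m-1$ factors, replacing one $\Pi^{A_i}_{R,\eta}$ by $\Pi^{A_i}_R$ at a time. Each step changes the norm by at most $\eta$, because the other factors are contractions. Hence $\norm{P-\Pi}\le(m-1)\eta$, and the bound $\norm{\widetilde E_{R,\eta}^{(m)}(M)-\mathcal E_R^{(m)}(M)}\le2(m-1)\eta$ follows.
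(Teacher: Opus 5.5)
Your proposal is correct and follows essentially the same route as the paper: two independent dyadic branch families sandwiching the raw-overlap subroutine from \Cref{lem:acceptance-as-overlap}, factorization of the compression into $\widetilde\Pi_R^{<m}(G_D\otimes I_{\mathrm{rest}})\widetilde\Pi_R^{<m}$, inverse-invariance to identify the Hermitian part, and the same telescoping bound combined with $\norm{PNP-\Pi N\Pi}\le 2\norm{P-\Pi}$. Your explicit remark that the compression factors because each branch register is disjoint from the others and is used only as a control is a detail the paper leaves implicit.
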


\begin{proof}
By \Cref{lem:acceptance-as-overlap}, the original verifier gives a stoquastic branch-overlap subroutine on $A_1\otimes\cdots\otimes A_m$ whose Hermitian overlap matrix is $M$.  Let $G_M$ be its raw compressed overlap, so
\[
  \frac{G_M+G_M^T}{2}=M .
\]
On the enlarged witness register $W_R$, define
\[
  B_G:=G_M{}_{A_{1,1}\cdots A_{m-1,1}A_m}\otimes I_{\mathrm{rest}},
  \qquad
  B_M:=M_{A_{1,1}\cdots A_{m-1,1}A_m}\otimes I_{\mathrm{rest}} .
\]

For each $i<m$, introduce two independent dyadic branch registers $t_i,u_i\in\{0,1\}^q$, initialized in $\ket{+}$.  On branch $(t,u)$, where $t=(t_1,\ldots,t_{m-1})$ and $u=(u_1,\ldots,u_{m-1})$, perform the following reversible operations on $W_R$ and on the initialized ancillas of the $G_M$-subroutine:
\begin{enumerate}[label=\textup{(\roman*)},leftmargin=2.2em]
  \item apply $U_{\pi(u_i)}^{A_i}$ to the block $A_i^{\otimes R}$ for every $i<m$;
  \item run the raw-overlap circuit for $G_M$ on the registers $A_{1,1},\ldots,A_{m-1,1},A_m$;
  \item apply $U_{\pi(t_i)}^{A_i}$ to the block $A_i^{\otimes R}$ for every $i<m$.
\end{enumerate}
All branch registers are left unchanged.  By \Cref{lem:dyadic-symmetrizer}, each controlled routing operation is a polynomial-size reversible circuit; hence the whole construction is a valid polynomial-size stoquastic branch-overlap verifier.

Let
\[
  Q(s):=U_{\pi(s_1)}^{A_1}\otimes\cdots\otimes U_{\pi(s_{m-1})}^{A_{m-1}}\otimes I_{A_m} .
\]
Compressing over the dyadic branch registers and over the initialized ancillas of the $G_M$-subroutine gives raw overlap
\[
  \E_{t,u}\, Q(t)B_GQ(u)
  =
  \left(\E_t Q(t)\right)B_G\left(\E_u Q(u)\right)
  =
  \widetilde\Pi_R^{<m}B_G\widetilde\Pi_R^{<m} .
\]
Because $\widetilde\Pi_R^{<m}$ is self-adjoint, the Hermitian part of this raw overlap is
\[
  \frac12\left(\widetilde\Pi_R^{<m}B_G\widetilde\Pi_R^{<m}
  +\bigl(\widetilde\Pi_R^{<m}B_G\widetilde\Pi_R^{<m}\bigr)^T\right)
  =
  \widetilde\Pi_R^{<m}B_M\widetilde\Pi_R^{<m}
  =
  \widetilde E_{R,\eta}^{(m)}(M).
\]
This proves the claimed realization.

The realized Hermitian overlap is a valid contraction.  Indeed, $0\preceq B_M\preceq I$ and $\widetilde\Pi_R^{<m}$ is a self-adjoint contraction, so for every vector $v$,
\[
  \langle v,\widetilde\Pi_R^{<m}B_M\widetilde\Pi_R^{<m}v\rangle
  =
  \langle \widetilde\Pi_R^{<m}v,B_M\widetilde\Pi_R^{<m}v\rangle
  \ge0
\]
and
\[
  \langle v,\widetilde\Pi_R^{<m}B_M\widetilde\Pi_R^{<m}v\rangle
  \le
  \norm{\widetilde\Pi_R^{<m}v}^2
  \le
  \norm v^2 .
\]
Thus $0\preceq\widetilde E_{R,\eta}^{(m)}(M)\preceq I$, and the branch-overlap acceptance matrix is $(I+\widetilde E_{R,\eta}^{(m)}(M))/2$.

It remains to compare the dyadic implementation with the ideal extension.  Write
\[
  P:=\Pi_R^{<m},
  \qquad
  \widetilde P:=\widetilde\Pi_R^{<m},
  \qquad
  B:=B_M .
\]
A telescoping expansion for the tensor product of the first $m-1$ blocks, together with the contraction bounds, gives
\[
  \norm{\widetilde P-P}
  \le
  \sum_{i=1}^{m-1}\norm{\Pi_{R,\eta}^{A_i}-\Pi_R^{A_i}}
  \le (m-1)\eta .
\]
Since $\norm B\le1$, $\norm P\le1$, and $\norm{\widetilde P}\le1$,
\[
\begin{aligned}
  \norm{\widetilde P B\widetilde P-PBP}
  &\le
  \norm{(\widetilde P-P)B\widetilde P}
  +\norm{PB(\widetilde P-P)} \\
  &\le 2\norm{\widetilde P-P}
  \le 2(m-1)\eta .
\end{aligned}
\]
This is exactly the stated perturbation bound.
\end{proof}

\begin{remark}[The key upgrade over a classical upper bound]
\Cref{prop:extension-realization} is the point at which the argument becomes a collapse theorem rather than merely an algorithmic upper bound.  The extension witness
\[
  A_1^{\otimes R}\otimes\cdots\otimes A_{m-1}^{\otimes R}\otimes A_m
\]
is a single quantum witness, and the dyadically symmetrized extension operator is the actual Hermitian overlap matrix of a one-witness stoquastic verifier.  Therefore soundness is against arbitrary states on the enlarged witness register, not only against states produced by the rounding argument in \Cref{sec:extension}.
\end{remark}

\section{Collapse of polynomially many provers to one prover}\label{sec:k-prover-collapse}

\begin{theorem}[Direct $k$-prover collapse]\label{thm:k-collapse}
For every polynomially bounded $k=k(n)$,
\[
  \stoqma(k)\subseteq\stoqma
\]
for inverse-polynomial promise gaps.
\end{theorem}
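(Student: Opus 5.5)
Take a promise problem in $\stoqma(k)$ with $k$-witness verifier $V_x$, compressed acceptance matrix $M_x$ on $A_1\otimes\cdots\otimes A_k$, and thresholds $c,s$ with $g:=c-s\ge1/\poly(n)$. If $k=1$ there is nothing to prove, so assume $m:=k\ge2$.

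\emph{Parameters.} By \Cref{lem:branch-basic}, $M_x$ is entrywise nonnegative, real symmetric, and satisfies $0\preceq M_x\preceq I$. Hence \Cref{thm:multi-extension} applies. Set $\eps:=g/4$ and take $R=1+\lceil 128L(k-1)^2/\eps^3\rceil$. Here $L\le\sum_i\log d_i$ is at most the total number of witness qubits times a constant, so $L=\poly(n)$. Since $k$ is polynomially bounded and $g$ is inverse polynomial, $R=\poly(n)$. Next choose $\eta:=g/(8(k-1))$. Then \Cref{lem:dyadic-symmetrizer} needs only $q=O(R\log R+\log(1/\eta))=\poly(n)$ branch bits per block. \Cref{prop:extension-realization} then gives a polynomial-size one-witness stoquastic verifier on $W_R$. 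Its Hermitian overlap is $\widetilde E:=\widetilde E_{R,\eta}^{(k)}(M_x)$, which satisfies $\norm{\widetilde E-\mathcal E_R^{(k)}(M_x)}\le 2(k-1)\eta=g/4$. Its maximum acceptance probability over all witnesses is $(1+\lambda_{\max}(\widetilde E))/2$.

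\emph{Completeness and soundness.}
\begin{itemize}[leftmargin=2em]
\item On a yes instance, $\omega_+^{(k)}(M_x)\ge c$. By \Cref{lem:product-lift}, $\Lambda_R^{(k)}(M_x)\ge c$, and hence $\lambda_{\max}(\widetilde E)\ge c-g/4$. For completeness one can also directly use the lifted vector $\widehat x$, but the perturbation bound suffices.
\item On a no instance, $\omega_+^{(k)}(M_x)\le s$. By \Cref{thm:multi-extension}, $\Lambda_R^{(k)}(M_x)\le s+g/4$, so $\lambda_{\max}(\widetilde E)\le s+g/2$. Because this is a largest eigenvalue, the bound holds against every state on $W_R$, including entangled and mixed ones (mixed states are convex combinations).
\end{itemize}
The new verifier therefore has completeness $c'=(1+c-g/4)/2$ and soundness $s'=(1+s+g/2)/2$. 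The gap is $c'-s'=g/8\ge1/\poly(n)$, and both thresholds are polynomial-time computable. Uniformity follows because the map $x\mapsto$ circuit in \Cref{prop:extension-realization} is built from the uniform circuit for $V_x$, the uniform dyadic routing of \Cref{lem:dyadic-symmetrizer}, and \Cref{lem:acceptance-as-overlap}.

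\emph{Main obstacle.} I expect the main difficulty to be bookkeeping rather than new mathematics. The de Finetti bound must be applied to the ideal $\mathcal E_R$ while the circuit realizes $\widetilde E$. The perturbation has to be charged against the gap in both directions, together with the factor-$2$ affine shift from overlap to acceptance (\Cref{rem:overlap-vs-acceptance}). One must also confirm that every size parameter ($R$, $q$, total witness qubits $R\sum_{i<k}\log_2 d_i+\log_2 d_k$) stays polynomial when $k$ grows polynomially. The $k^2/\eps^3$ dependence keeps this true.
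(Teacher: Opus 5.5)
Your proposal is correct and follows the paper's proof essentially step for step. Soundness comes from \Cref{thm:multi-extension} and completeness from the product lift. The dyadic realization of \Cref{prop:extension-realization} is charged against the gap by a norm perturbation in both directions, followed by the affine shift from overlap to acceptance. The only difference is your choice $\eta=g/(8(k-1))$ in place of the paper's $\eta=\Delta/(64(k-1))$, which gives a final gap of $g/8$ instead of $11\Delta/32$; both are inverse-polynomial.
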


\begin{proof}
The case $k=1$ is tautological, so assume $k\ge2$.  Let $\mathcal L\in\stoqma(k)$ be decided by a polynomial-size stoquastic verifier with completeness $c(n)$, soundness $s(n)$, and gap
\[
  \Delta(n):=c(n)-s(n)\ge\frac1{\poly(n)}.
\]
On input $x$, let $M_x$ be the compressed acceptance matrix acting on
\[
  A_1\otimes\cdots\otimes A_k .
\]
By the discussion in \Cref{sec:model},
\[
  x\in \mathcal L_{\mathrm{yes}}
  \implies
  \omega_+^{(k)}(M_x)\ge c,
  \qquad
  x\in \mathcal L_{\mathrm{no}}
  \implies
  \omega_+^{(k)}(M_x)\le s .
\]
Set
\[
  \eps:=\frac{\Delta}{4},
  \qquad
  B_x:=\max\left\{1,\sum_{i=1}^k\log\dim A_i\right\},
  \qquad
  R:=1+\left\lceil\frac{128B_x(k-1)^2}{\eps^3}\right\rceil,
\]
and
\[
  \eta:=\frac{\Delta}{64(k-1)} .
\]
Since $k$ is polynomially bounded, the total original witness length is polynomially bounded, and $\Delta^{-1}\le\poly(n)$, the extension length $R$ and all dyadic branch lengths from \Cref{lem:dyadic-symmetrizer} are polynomially bounded.

Let
\[
  E_x:=\mathcal E_R^{(k)}(M_x),
  \qquad
  \widetilde E_x:=\widetilde E_{R,\eta}^{(k)}(M_x)
\]
be the ideal and exactly implementable separately symmetric extensions.  The new one-witness stoquastic verifier from \Cref{prop:extension-realization} has compressed acceptance matrix
\[
  \widetilde C_x:=\frac{I+\widetilde E_x}{2} .
\]
Write
\[
  \alpha:=2(k-1)\eta=\frac{\Delta}{32}.
\]
In the yes case, \Cref{lem:product-lift} and \Cref{prop:extension-realization} give
\[
  \lambda_{\max}(\widetilde E_x)
  \ge
  \lambda_{\max}(E_x)-\alpha
  \ge
  \omega_+^{(k)}(M_x)-\alpha
  \ge
  c-\alpha .
\]
Therefore the new verifier accepts some witness with probability at least
\[
  c':=\frac{1+c-\alpha}{2} .
\]
In the no case, \Cref{thm:multi-extension} and \Cref{prop:extension-realization} give
\[
  \lambda_{\max}(\widetilde E_x)
  \le
  \lambda_{\max}(E_x)+\alpha
  \le
  \omega_+^{(k)}(M_x)+\eps+\alpha
  \le
  s+\eps+\alpha .
\]
Thus every witness is accepted with probability at most
\[
  s':=\frac{1+s+\eps+\alpha}{2} .
\]
The new gap is
\[
  c'-s'
  =
  \frac{\Delta-\eps-2\alpha}{2}
  =
  \frac{\Delta-\Delta/4-\Delta/16}{2}
  =
  \frac{11\Delta}{32},
\]
which is inverse-polynomial.  The construction is uniform and polynomial time, so $\mathcal L\in\stoqma$.
\end{proof}

The reverse inclusion $\stoqma\subseteq\stoqma(k)$ is immediate by having the verifier ignore $k-1$ witnesses.  Hence:

\begin{corollary}[Collapse for polynomially many provers]\label{cor:k-equality}
For every polynomially bounded $k=k(n)$,
\[
  \stoqma(k)=\stoqma
\]
for inverse-polynomial promise gaps.
\end{corollary}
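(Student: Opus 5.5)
The corollary has two inclusions, and I will prove them separately.

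For $\stoqma(k)\subseteq\stoqma$ I will simply invoke \Cref{thm:k-collapse}. That theorem already turns any $k$-prover stoquastic verifier with inverse-polynomial gap $\Delta$ into a one-witness stoquastic verifier. The new verifier has witness register $A_1^{\otimes R}\otimes\cdots\otimes A_{k-1}^{\otimes R}\otimes A_k$ and gap $11\Delta/32$. Nothing further is needed on this side beyond noting that the resulting parameters $c',s'$ are polynomial-time computable. This holds because $c,s,\Delta,\eps,\alpha$ are all explicit functions of $c(n),s(n)$, which is what \Cref{def:stoqma-k} requires.

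For $\stoqma\subseteq\stoqma(k)$, I start from a one-witness verifier with witness register $W$ and acceptance matrix $M_x$ on $W$. I build a $k$-witness verifier on $W\otimes A_2\otimes\cdots\otimes A_k$ that acts as the old circuit on $W$ and as the identity on the extra witnesses. Concretely, I take $A_2,\dots,A_k$ to be single qubits so the total witness length stays polynomial even when $k$ is polynomial. In branch-overlap form, the raw overlap becomes $G_x\otimes I$, so the acceptance matrix is $M_x\otimes I$.

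Since $\langle\psi_1\otimes\cdots\otimes\psi_k,(M_x\otimes I)\,\psi_1\otimes\cdots\otimes\psi_k\rangle=\langle\psi_1,M_x\psi_1\rangle$ for unit vectors, the product value equals $\lambda_{\max}(M_x)$. Hence completeness $c$ and soundness $s$ carry over unchanged.

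The only point needing any care is bookkeeping. The ignored registers must be legitimate witness registers of polynomial size, and adding them must preserve the stoquastic normal form; \Cref{lem:branch-basic} covers this, since the circuit just does nothing on them. Combining the two inclusions gives equality for every polynomially bounded $k$, with the gaps tracked explicitly and no amplification used.
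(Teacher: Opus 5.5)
Your proposal is correct and takes the same route as the paper: the forward inclusion is exactly \Cref{thm:k-collapse}, and the reverse inclusion is the trivial one obtained by letting the verifier ignore $k-1$ witnesses. Your extra bookkeeping (single-qubit dummy registers, raw overlap $G_x\otimes I$ giving acceptance matrix $M_x\otimes I$, and polynomial-time computability of $c',s'$) spells out what the paper leaves implicit.
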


In particular, taking $k=2$ gives the two-prover equality.

\begin{corollary}\label{cor:two-equality}
For inverse-polynomial promise gaps,
\[
  \stoqma(2)=\stoqma .
\]
\end{corollary}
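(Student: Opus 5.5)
This corollary follows by specializing \Cref{cor:k-equality} to the constant function $k(n)=2$, which is polynomially bounded. I would still record the two inclusions explicitly, so that it is clear which earlier results carry the weight and that the inverse-polynomial-gap convention of \Cref{def:stoqma-k} is respected on both sides.

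For $\stoqma(2)\subseteq\stoqma$, I would invoke \Cref{thm:k-collapse} with $k=2$. In that case only the first block $A_1$ is symmetrically extended, and $A_2$ is left untouched. The new witness register is $A_1^{\otimes R}\otimes A_2$ with $R=1+\lceil 128B_x/\eps^3\rceil$ and $\eps=\Delta/4$, so it is polynomial in size. The dyadic symmetrizer is applied to a single block with $\eta=\Delta/64$. \Cref{prop:extension-realization} then yields a one-witness stoquastic verifier whose gap is $11\Delta/32$, which is still inverse-polynomial.

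For $\stoqma\subseteq\stoqma(2)$, given a one-witness verifier I would build a two-witness verifier that reads only $W_1$ and ignores $W_2$. Its compressed acceptance matrix is $M\otimes I_{W_2}$. Because this matrix is entrywise nonnegative, its nonnegative product value equals $\lambda_{\max}(M)$: a Perron-type nonnegative maximizer $x_1$ exists, and any nonnegative unit $x_2$ can be paired with it. So completeness and soundness carry over unchanged.

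I do not expect any step here to be a real obstacle. The only thing to check is that the constant $k=2$ falls under the hypothesis of polynomially bounded $k$, and that the same promise-gap convention is used on both sides of the equality.
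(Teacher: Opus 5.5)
Your proposal is correct and matches the paper, which obtains this corollary simply by taking $k=2$ in \Cref{cor:k-equality}: the forward inclusion is \Cref{thm:k-collapse}, and your specializations $R=1+\lceil 128B_x/\eps^3\rceil$, $\eta=\Delta/64$ and gap $11\Delta/32$ are right, while the reverse inclusion uses a verifier that ignores the extra witness. Your Perron-type remark is fine but not needed: in the yes case you can pair any accepting one-witness state with an arbitrary second state, and in the no case $\langle\psi\otimes\phi,(M\otimes I)(\psi\otimes\phi)\rangle=\langle\psi,M\psi\rangle\le s$ holds directly.
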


\section{Consequences and comparisons}\label{sec:consequences}

\begin{corollary}[Classical upper bounds]\label{cor:am-pp}
For every polynomially bounded $k$,
\[
  \stoqma(k)\subseteq\am\cap\pp\subseteq\pspace .
\]
\end{corollary}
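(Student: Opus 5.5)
The plan is to combine \Cref{cor:k-equality} with known containments for one-witness stoquastic verification, so no new analysis is needed. Fix a polynomially bounded $k$ and a promise problem $\mathcal L\in\stoqma(k)$ with inverse-polynomial gap. By \Cref{thm:k-collapse}, $\mathcal L$ has a one-witness stoquastic verifier, still with an explicit inverse-polynomial gap $11\Delta/32$. It then suffices to show $\stoqma\subseteq\am$ and $\stoqma\subseteq\pp$ in exactly this inverse-polynomial-gap convention. The final inclusion $\am\cap\pp\subseteq\pspace$ is standard, since each of $\am$ and $\pp$ lies in $\pspace$.

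For $\am$, I would invoke the Bravyi--DiVincenzo--Oliveira--Terhal containment $\stoqma\subseteq\am$ \cite{BBT06,BDOT08}. This is the step that needs care: one must check that their result is stated for arbitrary inverse-polynomial completeness--soundness gaps, not only after amplification, since we have no black-box amplification. My understanding is that their random-walk/approximate-counting argument works directly with gap $1/\poly$ for arbitrary thresholds $c>s$. It uses the largest eigenvalue of the entrywise nonnegative acceptance matrix, estimated via set-size lower bounds by Goldwasser--Sipser. So I would cite it in that form, noting that the verifier produced by \Cref{prop:extension-realization} is a standard reversible-circuit stoquastic verifier by \Cref{lem:branch-basic}. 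If the cited statement assumed a specific completeness such as $c=1$ or a fixed threshold, I would first shift and rescale the thresholds. Mixing with an always-accept or always-reject branch via an extra $\ket{+}$ bit moves $c,s$ affinely while preserving stoquasticity and keeping the gap inverse-polynomial. This avoids invoking amplification.

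For $\pp$, note that a stoquastic verifier is in particular a $\qma$ verifier with the same acceptance probabilities. Hence $\mathcal L\in\qma$ with inverse-polynomial gap, and $\qma\subseteq\pp$ \cite{MW05}, which handles any inverse-polynomial gap by in-place amplification. Intersecting the two containments gives $\stoqma(k)\subseteq\am\cap\pp\subseteq\pspace$. The main obstacle is therefore only the bookkeeping in the $\am$ step: confirming that the cited $\stoqma\subseteq\am$ applies at the given gap and thresholds, with the affine threshold shift as the fallback.
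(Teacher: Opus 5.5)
Your proposal is correct and follows the paper's proof. You apply the collapse of \Cref{thm:k-collapse}, cite Bravyi--Bessen--Terhal/BDOT for $\stoqma\subseteq\am$, use $\stoqma\subseteq\qma\subseteq\pp$ via Marriott--Watrous, and finish with $\am\subseteq\pspace$ and $\pp\subseteq\pspace$.

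Your worry about the gap convention resolves in your favor: the BBT largest-eigenvalue/counting argument works directly for any thresholds with $c-s\ge 1/\poly(n)$. That is fortunate, because your affine-mixing fallback has two problems. It could never reach a $c=1$ normalization. And an ``always-reject'' branch is not stoquastically available; only always-accept ($H=I$) or acceptance-$1/2$ ($H=0$) branches are.
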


\begin{proof}
By \Cref{cor:k-equality}, $\stoqma(k)=\stoqma$.  The containment $\stoqma\subseteq\am$ is the standard stoquastic largest-eigenvalue/approximate-set-size upper bound of Bravyi--DiVincenzo--Oliveira--Terhal, in the stoquastic verification framework of Bravyi--Bessen--Terhal \cite{BBT06,BDOT08}.  Also $\stoqma\subseteq\qma$ by definition, and $\qma\subseteq\pp$ by Marriott--Watrous error reduction and the standard $\pp$ characterization of quantum verification \cite{MW05}.  Finally $\am\subseteq\pspace$ and $\pp\subseteq\pspace$ are classical.
\end{proof}

\begin{corollary}[Separable stoquastic sparse Hamiltonians]\label{cor:grilo-rozos}
The separable stoquastic sparse-Hamiltonian problem of Grilo and Rozos is $\stoqma$-complete, under the same promise-gap convention.
\end{corollary}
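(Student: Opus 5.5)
The statement to prove is \Cref{cor:grilo-rozos}. It asks for two things: the separable stoquastic sparse-Hamiltonian problem lies in $\stoqma$, and it is $\stoqma$-hard.

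For membership, I will rely on the result of Grilo and Rozos that their separable problem is complete for the two-prover stoquastic class \cite{GR26}. In particular it lies in $\stoqma(2)$ (equivalently $\stoqQMA(2)$), under the inverse-polynomial-gap convention of \Cref{def:stoqma-k}. \Cref{cor:two-equality} then places it in $\stoqma$.

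One point needs checking here. Their verifier must fit the framework of \Cref{sec:model}: a uniform family of reversible-circuit verifiers with $\ket{0}$ and $\ket{+}$ ancillas and a Hadamard-basis test, whose completeness and soundness $c(n)>s(n)$ are polynomial-time computable and differ by $1/\poly(n)$. If their promise is instead stated on energies of a sparse stoquastic Hamiltonian, I will translate it into acceptance probabilities. The standard stoquastic sparse-Hamiltonian verifier accepts with probability affine in $-\langle\psi,H\psi\rangle$ after rescaling, so an inverse-polynomial energy gap becomes an inverse-polynomial acceptance gap. With that in place, \Cref{thm:k-collapse} applies directly with $k=2$.

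For hardness, I will reduce from ordinary $\stoqma$ to the separable problem. The separable problem is hard for $\stoqma(2)$ by \cite{GR26}, and $\stoqma\subseteq\stoqma(2)$ trivially, because a two-witness verifier can ignore the second witness. Composing these gives $\stoqma$-hardness. Alternatively, a direct reduction embeds a one-prover stoquastic sparse instance $H$ as $H\otimes I$ on two registers. Its product-state minimum equals its unrestricted minimum because $H\otimes I$ ignores the second factor, and $H\otimes I$ remains sparse and stoquastic.

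The main obstacle is the bookkeeping of promise-gap conventions. Both the hardness direction and the membership direction must use the same gap regime, namely inverse-polynomial gaps with efficiently computable thresholds. \Cref{thm:k-collapse} only yields the $11\Delta/32$ gap inside that regime, so I will verify that the Grilo--Rozos completeness theorem is stated, or can be restated without amplification, in exactly this regime.
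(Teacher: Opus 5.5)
Your proposal is correct and matches the paper's proof. The paper simply combines the Grilo--Rozos $\stoqma(2)$-completeness result with \Cref{cor:two-equality}; your membership step (via the collapse) and hardness step (via the trivial inclusion $\stoqma\subseteq\stoqma(2)$) spell that out, and the $H\otimes I$ alternative and the gap-convention check are sound extras the paper leaves implicit.
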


\begin{proof}
Grilo and Rozos prove that the separable stoquastic sparse-Hamiltonian problem is complete for $\stoqma(2)$ \cite{GR26}.  By \Cref{cor:two-equality}, $\stoqma(2)=\stoqma$.
\end{proof}

\paragraph{Comparison with prover compression.}
The proof above does not invoke a reduction from $\stoqma(k)$ to $\stoqma(2)$.  This is useful because prover-compression theorems for stoquastic protocols are naturally tied to product tests and often come with hypotheses or parameter regimes, such as very high completeness, that one would rather not import into a collapse theorem with an arbitrary inverse-polynomial promise gap.  The direct extension avoids this issue: it treats all $k$ provers at once, and the only loss is the explicit polynomial factor in
\[
  R=O\!\left(\frac{k^2\sum_i\log\dim A_i}{\Delta^3}\right),
\]
where $\Delta$ is the original promise gap.

\paragraph{Relation to Liu--Wu.}
Liu and Wu's work develops the broader theory of $\stoqma(k)$, including lower bounds, robustness properties, prover-compression results in high-completeness regimes, and upper bounds via nonnegative tensor optimization \cite{LW26}.  Our theorem is complementary in emphasis and stronger on the standard inverse-polynomial-gap upper-bound question: it shows that, for polynomial witness length and polynomially many provers, the product-state promise can be eliminated inside the stoquastic verifier model itself.  The lower-bound phenomena of \cite{LW26} remain important, especially for short proofs and fine-grained parameter regimes; the collapse says that these phenomena already live within one-witness $\stoqma$ once polynomial witness length and inverse-polynomial gaps are allowed.

\paragraph{Comparison with the BKS/SoS route.}
The BKS/SoS route and the present collapse begin with the same observation: after stoquastic compression, the verifier value is a nonnegative product optimization.  The Barak--Kelner--Steurer theorem rounds sum-of-squares relaxations for nonnegative low-degree forms \cite{BKS14}.  Applied directly to the constant-degree form arising from a constant number of witnesses, it gives a quasipolynomial-time algorithm in the witness dimension for constant accuracy, and more generally an $N^{\poly(n)}$-time algorithm at inverse-polynomial accuracy on an $N$-dimensional witness space.  Since the stoquastic witness dimension is $N=2^{\poly(n)}$, this yields an $\expclass$ upper bound in the verifier setting.  Liu and Wu explicitly formulate this folklore result and refine this perspective in their systematic study of $\stoqma(k)$ \cite{LW26}.

The present proof does not solve an SoS relaxation.  It constructs a separately symmetric extension operator of dimension
\[
  \left(\prod_{i=1}^{k-1}d_i^R\right)d_k
  \qquad\text{with}\qquad
  R=O\!\left(\frac{k^2\sum_i\log d_i}{\Delta^3}\right).
\]
More importantly, the extension is not only a classical relaxation: after dyadic approximation of the permutation averages, it is the overlap matrix of an ordinary one-witness stoquastic verifier.  Thus unentanglement is absorbed into the witness register rather than searched over by an exponential-time algorithm.

\paragraph{Why positivity is essential.}
The proof uses entrywise nonnegativity twice, and both uses are indispensable.  Analytically, the direct rounding lemma bounds the value of an arbitrary extension state by the quadratic form of the square root of its computational-basis distribution.  This domination ignores all phases, and is valid only because every matrix entry of the test is nonnegative.  A signed or complex test can distinguish two states with the same measured distribution by their relative phases, so the square-root-marginal rounding step has no analogue in general.  Computationally, the implementation step relies on stoquastic branch overlaps being nonnegative averages of reversible classical branches.  For a general $\qma(k)$ verifier, a symmetric-extension eigenvector need not be roundable to a product witness, and the extension overlap need not remain within any stoquastic verifier model.  Thus the collapse should be interpreted as evidence that destructive interference, not merely entanglement, is the obstruction to de Finetti-style prover elimination in general quantum verification.

\begin{remark}[Choice of ordering and extension side]
The theorem extends the first $k-1$ prover registers and leaves the last register unextended.  The ordering is arbitrary.  For unbalanced protocols, one may choose the unextended register, or more generally the ordering in the tensorization step, to optimize the extension dimension and entropy budget.  The symmetric statement with all $k$ registers extended is also valid, but the one-sided version above is slightly smaller and is enough for the collapse.
\end{remark}

\end{document}